\def\BibTeX{{\rm B\kern-.05em{\sc i\kern-.025em b}\kern-.08em
    T\kern-.1667em\lower.7ex\hbox{E}\kern-.125emX}}
\newcolumntype{P}[1]{>{\centering\arraybackslash}p{#1}}
\newtheorem{problem}{\textbf{Problem}}
\newtheorem{definition}{\bf{Definition}}
\newtheorem{lemma}{\bf{Lemma}}
\newtheorem{theorem}{\bf{Theorem}}
\newtheorem{remark}{\bf{Remark}}
\newtheorem{assumption}{\bf{Assumption}}
\newtheorem{corollary}{\bf{Corollary}}
\newtheorem{example}{\textbf{Example}}
\def\myunderbar#1{\underline{\sbox\tw@{$#1$}\dp\tw@\z@\box\tw@}}
\newcommand{\sign}{\text{sign}}
\newcommand{\vect}[1]{\boldsymbol{#1}}
\newcommand{\norm}[1]{ \lVert #1 \rVert}
\newcommand{\card}[3]{| \mathcal{#1}_{#3}^{#2} |}
\newcommand{\Nhop}[2]{\mathcal{N}_{#1}^{{#2}\text{-hop}}} 
\newcommand{\Neigh}[2]{\mathcal{N}_{#1}^{{#2}}} 
\newcommand{\estimate}[3]{\hat{{#1}}^{{#2}}_{{#3}}}
\newcommand{\state}[3]{{{#1}}^{{#2}}_{{#3}}}
\newcommand{\error}[3]{\Tilde{{#1}}^{{#2}}_{{#3}}}
\newcommand{\statediff}[5]{{#1}^{#2}_{#3}-{#4}^{#5}_{#3}}
\newcommand{\kc}[2]{{#1}_{#2}^{\text{kc}}}
\begin{document}

\title{Communication-aware Multi-agent Systems Control Based on $k$-hop Distributed Observers}
\author{Tommaso Zaccherini, Siyuan Liu and Dimos V. Dimarogonas
	\thanks{This work was supported in part by the Wallenberg AI, Autonomous Systems and
Software Program (WASP) funded by the Knut and Alice Wallenberg (KAW) Foundation, the Horizon Europe EIC project SymAware (101070802), 
the ERC LEAFHOUND Project, and the Swedish Research Council (VR), and Digital Futures.
}
\thanks{Tommaso Zaccherini, Siyuan Liu, and Dimos V. Dimarogonas are with the Division
of Decision and Control Systems, KTH Royal Institute of Technology, Stockholm, Sweden.
	E-mail: {\tt\small \{tommasoz,siyliu,dimos\}@kth.se}. 
}
}

\maketitle
\begin{abstract}
We propose a distributed control strategy to allow the control of a multi-agent system requiring $k$-hop interactions based on the design of distributed state and input observers. In particular, we design for each agent a finite time convergent state and input observer that exploits only the communication with the $1$-hop neighbors to reconstruct the information regarding those agents at a $2$-hop distance or more. We then demonstrate that if the $k$-hop based control strategy is set-Input to State Stable with respect to the set describing the goal, then the observer information can be adopted to achieve the team objective with stability guarantees.
\end{abstract}


\section{Introduction}

A multi-agent system refers to a system composed of multiple interacting autonomous agents with their own goals, capabilities, and decision-making processes that work together or compete to achieve collective or individual objectives. Due to their advantages with respect to single agents in term of redundancy and flexibility, they have been extensively investigated during the years under several aspects \cite{eb184279-05f5-3acc-a300-750c6f4a17e8, 1333204, SEYBOTH2013245,NI2010209}.
In particular, thanks to their possibility of performing multiple simultaneous actions, they represents a valid choice to better accomplish the assigned objective in terms of timing and efficiency.

The main drawback compared to single-agent systems consists in the increased complexity in terms of coordination and communication requirements. 
Due to the lack of centralized global memory, the cooperation among the agents relies only on the local information available by means of the inter-agent communication and sensing with the $1$-hop neighbors, while usually the the goal depends on the global state of the system. Therefore, when communication and sensing capabilities are limited, it may become helpful to enable each agent to exploit the estimates of the states of those agents that lie outside its immediate $1$-hop neighborhood. 

Several works concerning distributed state estimation in network systems are available in the literature \cite{ACIKMESE20141037,7440802,7386579}. In \cite{ACIKMESE20141037}, a decentralized observer for a system of agents with discrete-time dynamic is proposed, where knowledge of the model and local information are exploited to estimate the plant state by means of a consensus based filter. In \cite{7440802} instead, an asymptotic observer in which each agent estimates the global state by exploiting the communication interaction with its $1$-hop neighbors is proposed. The main drawback of these approaches is the need to estimate the entire system state, irrespective of the specific information required by each agent. As a result, in large-scale network applications where agents may only require partial information about the global state, these approaches become difficult to implement due to their poor scalability with respect to the number of agents. Alternative approaches rely on the decomposition of the network into subsystems with local controllers where the information exchange between subsystems may or not be allowed as in \cite{VACCARINI2009328} and \cite{yang2006networked}, respectively.
Other results regarding decentralized observers include \cite{kar2010gossip}, where distributed Kalman filters are adopted, \cite{GRIP20121339} where both linear and non-linear interconnected systems are considered and \cite{8754717}, which introduces a distributed, finite-time convergent $k$-hop observer, where each agent in the network only needs to communicate with its $1$-hop neighbors to estimate its own state and input, as well as those of the agents which resides up to $k$-hops away. 

This paper is inspired by the $k$-hop observer-based distributed control strategy proposed in \cite{8754717}. 
By the observation that each agent in the network knows its own state and input and may be able to receive those of its $1$-hop neighbors through communication, we develop a distributed observer by restricting the estimation only to the states and inputs of those agents which are $2$-hop distant or more.
Compared to the work in \cite{8754717}, our results increase the speed of convergence of each agent observer estimation by exploiting the value, provided by the common $1$-hop neighbors, of the states of those elements that are $2$-hop distant. In particular, we propose a finite time $k$-hop distributed observer for non-linear systems where each agent estimates the states and the inputs of the agents $2$-hop distant or more, by interacting only with the agents belonging to its $1$-hop neighborhood. Moreover, we show that under a bounded input assumption the state observer convergence results to be independent of the input observer dynamics. We also show that by adopting a set-ISS feedback control law it is possible to exploit the states estimation information to drive the system towards an equilibrium representing the team objective.
Furthermore, given the reduced number of estimates that each observer needs to update compared to those in \cite{8754717}, the proposed solution results more scalable while dealing with large-scale networks.

The paper is organized as follow: Section \ref{Preliminaries} presents the preliminaries and the problem setting. Section \ref{state_observer} and Section \ref{input_observer} respectively propose the results concerning the $k$-hop distributed state and input observers. In Section \ref{Closed_loop} we introduce the feedback control structure and provide the conditions that guarantee the convergence of a $k$-hop estimation based feedback controller toward the team objective. In Section \ref{Simulations} we provide a simulation result to demonstrate the convergence towards the goal when $k$-hop estimation is used in the feedback controller, and in Section \ref{Conclusion_and_Future_work} we provide final remarks and future work.

\par

\section{ Preliminaries and Problem setting}\label{Preliminaries}
\textbf{Notation}: We denote by $\mathbb{R}$ and $\mathbb R_{\ge 0}$ the set of real and non-negative real numbers, respectively. Let $|S|$ be the cardinality of a set $S$, $\mathbb{R}^n$ be an $n$-dimensional Euclidean space and $\mathbb{R}^{n \times m}$ be a space of real matrices with $n$ rows and $m$ columns. Denote by $I_n$ the identity matrix of size $n$ and by $1_n = [1, \dots,1]^\top$ the vector of ones of size $n$. Given a matrix $B \in \mathbb{R}^{n\times n}$, we represent with $\lambda_i(B)$, $\lambda_{\text{min}}(B)$ and $\lambda_{\text{max}}(B)$ respectively the $i$-th, minimum and maximum eigenvalues belonging to the spectrum $\sigma(B)$ of matrix $B$. Given a positive definite matrix $B=B^\top \succ 0$ and a vector $x \in \mathbb{R}^n$, $\norm{x}_B= \sqrt{x^\top B x}$, with the convention $\norm{x}=\norm{x}_I$. Additionally, $\norm{x}_1 = \sum_{i =1}^n |x_i|$. Given a matrix $B$, we adopt $B \prec 0$ to denote that $B$ is negative definite.
Let $\text{diag}(a_1, \dots, a_n)$ be the diagonal matrix with diagonal elements  $a_1, . . . , a_n$ and let $\otimes$ be the Kronecker product. 
We denote by $\text{sign}(x)$ the non-smooth function defined as: $\sign(x) = 1 \text{ if } x \geq 0$ and $\sign(x) = -1 \text{ if } x < 0$.
Given the presence of the $\sign(\cdot)$ function in the observers' dynamics, non-smooth analysis is required to study their convergence. For this purpose, as in \cite[(1.2a)]{4048815}, we denote by $K[f]: \mathbb{R}^m \rightarrow \mathbb{R}^m$ the set-valued map of a measurable, locally bounded function $f(y): \mathbb{R}^m \rightarrow  \mathbb{R}^m$, the function  defined as $K[f](y) := \bigcap_{\delta>0}\bigcap_{\mu\{M\}=0}\overline{\text{co}}\{ f(\mathcal{B}(y,\delta)/M)\}$,
where $\mathcal{B}(y,\delta)$ denotes the ball of radius $\delta$ centered at $y$, $\cap_{\mu\{M\}=0}$ the intersection over all sets $M$ of Lebesgue measure zero, $\mathcal{B}(y,\delta)/M$ the set difference between $\mathcal{B}(y,\delta)$ and $M$ and $\overline{\text{co}}$ the convex closure.
Moreover, we further define $\norm{K[f]}= \sup_{z \in R(K[f])} \norm{z}$,
where $R(K[f]) =  \bigcup_{y \in \mathbb{R}^m} K[f](y)$.
We use notations $\mathcal{K}$ and $\mathcal{KL}$ to denote the different classes of comparison functions, as follows: $\mathcal{K} \!=\! \{\gamma : \mathbb R_{\ge 0}\rightarrow\mathbb R_{\ge 0}   :  \gamma \text{ is continuous, strictly increasing and } \gamma(0)=0\}$;  $\mathcal{KL} \!=\! \{\beta : \mathbb R_{\ge 0} \!\times \mathbb R_{\ge 0} \rightarrow\mathbb R_{\ge 0}  :$ for each fixed $s$, the map  $\beta(r,s)$  belongs to class  $\mathcal{K}$  with  respect to  $r$  and, for each fixed  nonzero $r$,  the map $\beta(r,s)$ is decreasing with respect to  $s$  and $\beta(r,s) \rightarrow 0 \text{ as } s \rightarrow \infty \}$.

\subsection{Multi-agent systems}
Consider a multi-agent system composed of a set of $n$ interacting agents $\mathcal{V} = \{1,2,\dots, n\}$. Suppose each agent $i \in \mathcal{V}$ behaves according to the nonlinear dynamics:
\begin{equation}
    \label{eq: agent's dynamic}
    \dot x_i(t) =  f(x_i) + A x_i + u_i,
\end{equation}
where $x_i \in \mathbb{X}\subset\mathbb{R}^N$ with $\mathbb{X}$ denoting the state space, $A \in \mathbb{R}^{N \times N}$, $f:\mathbb{R}^N \rightarrow \mathbb{R}^N $ is a Lipschitz nonlinear function with Lipschitz constant $l_{f}$ and $u_i \in \mathbb{U} \subset \mathbb{R}^N$ is a measurable and essentially locally bounded function satisfying Assumption~\ref{Assumption on input}. 
\begin{assumption} One of the following conditions holds:
\label{Assumption: input assumption}
    \begin{enumerate}
        \item $u_i$ is bounded with known upper bound $d_{u_i} \in \mathbb R_{\ge 0} \ \forall i \in \mathcal{V}$.\label{Assumption on bounded input}
        \item The derivative $K[\dot{u}_i](\cdot)$ is bounded with known upper bound $d_{\dot{u}_i} \in \mathbb R_{\ge 0} \ \forall i \in \mathcal{V}$. \label{Assumption on bounded generalized input derivative}
    \end{enumerate} 
    \label{Assumption on input}
\end{assumption}
Furthermore, denote with $\vect{x}$ and $\vect{u}$ the stacked vector of agents states and inputs:
\begin{equation}
\label{eq:vector of state}
    \vect{x}= \left[x_1^\top, \dots, x_n^\top \right]^\top, \ \vect{u}=\left[u_1^\top, \dots, u_n^\top\right]^\top.
\end{equation}
\subsection{Communication Graph}
Let the communication capabilities among the agents be described by an undirected graph $\mathcal{G}=(\mathcal{V},\mathcal{E})$, where $\mathcal{V}$ is the set of nodes and $\mathcal{E} \subseteq \mathcal{V} \times \mathcal{V}$ is the set of edges representing the set of established communication among the agents.

Define a path between agent $i \in \mathcal{V}$ and $j  \in \mathcal{V}$ as the set of non-repeating edges through which $j$ can be reached by $i$. Under this definition, a $k$-hop path between agents $i, j  \in \mathcal{V}$ is a path involving $k$ edges from $i$ to  $j$. 

Denote with $\Nhop{i}{k}$ the set of $k$-hop neighbors of agent $i \in \mathcal{V}$, i.e., the set of nodes $j \in \mathcal{V}$ for which there exists a $p$-hop path from $j$ to $i$ with $2\leq p\leq k$. Moreover, denote the elements of this set with $\Nhop{i}{k} = \{ n^i_1, \dots, n^i_{\eta_i}\}$, where each $n^i_j$ with $j \in \{1, \dots, \eta_i\}$ is the global index of the $j$-th $k$-hop neighbor of $i$ and where $\eta_i= \card{N}{k\text{-hop}}{i}$ represents its cardinality.
For brevity, we indicate with $\Neigh{i}{}$ the set of $1$-hop neighbors of agent $i\in \mathcal{V}$.

For the purpose of the observer design, denote with $\vect{x}^i$ and $\state{\vect{u}}{i}{}$ the vectors containing respectively the state $x_j$ and input $u_j$ of the agents $j \in \Nhop{i}{k}$:
\begin{equation}
    \label{stack vector of real values estimated by agent i}
    \state{\vect{x}}{i}{} = \left[\state{x}{\top}{n_1^i}, \dots ,\state{x}{\top}{n^i_{\eta_i}} \right]^\top, \ \state{\vect{u}}{i}{} = \left[\state{u}{\top}{n_1^i}, \dots ,\state{u}{\top}{n^i_{\eta_i}} \right]^\top
\end{equation}
and let:
\begin{equation}
\label{definition of input and state estimations done by agent i}
    \begin{split}
        \estimate{\vect{x}}{i}{} &= \left[(\estimate{x}{i}{n_1^i})^\top, \dots ,(\estimate{x}{i}{n^i_{\eta_i}})^\top \right]^\top, \\
         \estimate{\vect{u}}{i}{} &= \left[(\estimate{u}{i}{n_1^i})^\top, \dots ,(\estimate{u}{i}{n^i_{\eta_i}})^\top \right]^\top
    \end{split}
\end{equation}
be the vectors containing their estimate carried out by the agent $i$, i.e. $\estimate{x}{i}{n_p^i}$ and $\estimate{u}{i}{n_p^i}$ for $p \in \{1, \dots, \eta_i\}$ are the estimates of the state $\state{x}{}{n_p^i}$ and input $\state{u}{}{n_p^i}$ of agent $n_p^i \in \Nhop{i}{k}$ done by $i$. Furthermore, denote with  $\tilde{\vect{x}}^i$ and $\tilde{\vect{u}}^i$ the estimation errors:
\begin{equation}
    \label{Definition: error on input and state estimation}
    \error{\vect{x}}{i}{} = \statediff{\vect{x}}{i}{}{\hat{\vect{x}}}{i},\ \ \ \ \error{\vect{u}}{i}{} =\statediff{\vect{u}}{i}{}{\hat{\vect{u}}}{i}.
\end{equation}

Indicate with $\vect{x}_i$ and $\vect{u}_{i}$ the vectors defined as: 
\begin{equation}
    \vect{x}_i = 1_{\eta_i} \otimes x_i, \ \ \ \vect{u}_i = 1_{\eta_i} \otimes u_i,
\end{equation}
and denote with $\estimate{\vect{x}}{}{i}$ and $\estimate{\vect{u}}{}{i}$ the stacked vector estimates of 
$x_i$ and $u_i$ computed by each of the agents $j \in \Nhop{i}{k}$:
\begin{equation}
    \begin{split}
    \estimate{\vect{x}}{}{i} &= \left[(\estimate{x}{n_1^i}{i})^\top, \dots , (\estimate{x}{n_{\eta_i}^i}{i})^\top \right]^\top \\
    \estimate{\vect{u}}{}{i} &= \left[(\estimate{u}{n_1^i}{i})^\top, \dots , (\estimate{u}{n_{\eta_i}^i}
    {i})^\top \right]^\top.
    \end{split}
\end{equation}
Similar to \eqref{Definition: error on input and state estimation}, we can define the estimation errors on $\estimate{\vect{x}}{}{i}$ and $\estimate{\vect{u}}{}{i}$, computed by all  $j \in \Nhop{i}{k}$ as:

\begin{equation}
    \label{Definition: error on input and state estimation of agent i}
   \error{\vect{x}}{}{i} = \statediff{\vect{x}}{}{i}{\hat{\vect{x}}}{}, \ \ \ \error{\vect{u}}{}{i} =\statediff{\vect{u}}{}{i}{\hat{\vect{u}}}{}. 
\end{equation}

For future implementation, define for each $i \in \mathcal{V}$ the matrix $\hat{\vect{P}}_i$ as the one selecting the states estimated by agent $i$, $\state{\vect{x}}{i}{} = \hat{\vect{P}}_i \vect{x}$,
where:
\begin{equation}
    \label{eq: definition of the matrix P}
    \hat{\vect{P}}_i = \hat{P}_i \otimes I_{N}
\end{equation}
and $\hat{P}_i = [e_{n_1^i} \ e_{n_2^i} \dots  e_{n^i_{\eta_i}}]^\top$ is a $\eta_i \times n$ binary matrix where each $e_{j}$ with $j \in \Nhop{i}{k}$ is a vector with all zeros except from the $j$-th element which is equal to 1.
In a similar way, let $\vect{P}_i$ be the $\card{N}{}{i} \times n$  matrix selecting the components of the states of the $1$-hop neighbors of agent $i$. 

Example \ref{Example: communication graph = path graph} is reported to clarify the adopted notation.
\begin{example}
    \label{Example: communication graph = path graph}
    Consider a network of 4 agents communicating according to the path graph in Fig.~\ref{fig: 2 fig 3-hop neihbours communication graph = path graph}. Suppose $N = 1$ and that we are interested in estimating 3-hop neighbor agents' state.
    \begin{figure}[t]
        \vspace{0.1cm}
        \centering
        \def\svgwidth{0.3\textwidth}
\begingroup%
  \makeatletter%
  \providecommand\color[2][]{%
    \errmessage{(Inkscape) Color is used for the text in Inkscape, but the package 'color.sty' is not loaded}%
    \renewcommand\color[2][]{}%
  }%
  \providecommand\transparent[1]{%
    \errmessage{(Inkscape) Transparency is used (non-zero) for the text in Inkscape, but the package 'transparent.sty' is not loaded}%
    \renewcommand\transparent[1]{}%
  }%
  \providecommand\rotatebox[2]{#2}%
  \newcommand*\fsize{\dimexpr\f@size pt\relax}%
  \newcommand*\lineheight[1]{\fontsize{\fsize}{#1\fsize}\selectfont}%
  \ifx\svgwidth\undefined%
    \setlength{\unitlength}{745.51181102bp}%
    \ifx\svgscale\undefined%
      \relax%
    \else%
      \setlength{\unitlength}{\unitlength * \real{\svgscale}}%
    \fi%
  \else%
    \setlength{\unitlength}{\svgwidth}%
  \fi%
  \global\let\svgwidth\undefined%
  \global\let\svgscale\undefined%
  \makeatother%
  \begin{picture}(1,0.12547529)%
    \lineheight{1}%
    \setlength\tabcolsep{0pt}%
    \put(0,0){\includegraphics[width=\unitlength,page=1]{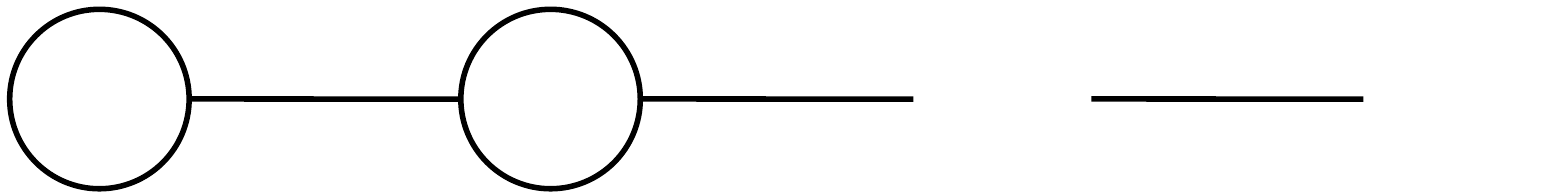}}%
    \put(0.04616404,0.04214006){\color[rgb]{0,0,0}\makebox(0,0)[lt]{\lineheight{1.25}\smash{\begin{tabular}[t]{l}1\end{tabular}}}}%
    \put(0.33706138,0.04179947){\color[rgb]{0,0,0}\makebox(0,0)[lt]{\lineheight{1.25}\smash{\begin{tabular}[t]{l}2\end{tabular}}}}%
    \put(0,0){\includegraphics[width=\unitlength,page=2]{node_graph_path_small.pdf}}%
    \put(0.62861855,0.04220555){\color[rgb]{0,0,0}\makebox(0,0)[lt]{\lineheight{1.25}\smash{\begin{tabular}[t]{l}3\end{tabular}}}}%
    \put(0.91707815,0.04220553){\color[rgb]{0,0,0}\makebox(0,0)[lt]{\lineheight{1.25}\smash{\begin{tabular}[t]{l}4\end{tabular}}}}%
    \put(0,0){\includegraphics[width=\unitlength,page=3]{node_graph_path_small.pdf}}%
  \end{picture}%
\endgroup%

        \caption{Example for a path communication graph and k=3.}
        \label{fig: 2 fig 3-hop neihbours communication graph = path graph}
         \vspace{-0.4cm}
    \end{figure}    
    \noindent Denote the global state with $\vect{x}=[x_1,x_2,x_3,x_4]^\top$. Then, the vectors $\estimate{\vect{x}}{1}{}$, $\estimate{\vect{x}}{}{1}$ and the matrices $\hat{\vect{P}}_1$, $\vect{P}_1$ are defined as:
    \begin{equation}
        \begin{split}
             \estimate{\vect{x}}{1}{} &= \begin{bmatrix}
                                            \estimate{x}{1}{3} \\ \estimate{x}{1}{4}
                                        \end{bmatrix} , \ \                             
            \estimate{\vect{x}}{}{1} = \begin{bmatrix}
                                            \estimate{x}{3}{1} \\ \estimate{x}{4}{1}
                                        \end{bmatrix}, \\ 
           \hat{\vect{P}}_1 &= \begin{bmatrix}
                                            0 && 0 && 1 && 0 \\
                                            0 && 0 && 0 && 1
                                        \end{bmatrix}, \\ 
            \vect{P}_1 &= \begin{bmatrix}
                            0 && 1 && 0 && 0 \\
                        \end{bmatrix}.
        \end{split}
   \end{equation}
    Finally, the set of 3-hop neighbors of agent $1$ is $\Nhop{1}{3} = \{n_1^1,n_1^2 \}=\{3,4\}$.
\end{example}

Suppose the following assumptions hold for the communication graph:
\begin{assumption}
    Each $i\in \mathcal{V}$ knows which are the agents belonging to $\Neigh{i}{}$ and $\Nhop{i}{k}$. 
\end{assumption}
\begin{assumption}
    \label{Assumption on 2-hop communication}
    Each agent $i\in \mathcal{V}$ has access and can propagate at each time instant the state and input of the agents $j \in \Neigh{i}{}$ to its $1$-hop neighbors $\Neigh{i}{}$.
\end{assumption}

Note that, while the last assumption on the state is reasonable if we consider sensing capabilities of the agents, the one on the input requires the communication to be fast enough with negligible time delays.

With the notation presented, the first problem can be formalized as:
\begin{problem}
    \label{Problem state estimation}
    Design a finite-time convergent distributed observer that allows each agent $i \in \mathcal{V}$ with dynamics \eqref{eq: agent's dynamic} to track the state of all $j \in \Nhop{i}{k}$, i.e., such that:
    \begin{equation}
        \exists \ T_{x,i}>0 : \norm{\tilde{\vect{x}}^i(t)} = 0 \ \  \forall  t \geq T_{x,i} \text{ and } \forall  i \in \mathcal{V}.
    \end{equation}
\end{problem}

\section{Distributed \texorpdfstring{$k$}\-hop State Observer}\label{state_observer}
In this section a finite-time distributed observer to solve Problem \ref{Problem state estimation} is introduced. For this purpose assume that the state estimate $\estimate{\vect{x}}{i}{}$ is updated following \eqref{Eq: State observer dynamic}:
\begin{equation}
    \label{Eq: State observer dynamic}
    \begin{split}
        \dot{\hat{\vect{x}}}^i &=\vect{f}(\estimate{\vect{x}}{i}{}) + \vect{A}^{i}\estimate{\vect{x}}{i}{}+\vect{\Omega}^i\vect{G}^i\vect{\xi}^i + \vect{\Theta}^i \sign(\vect{G}^i \vect{\xi}^i) + \estimate{\vect{u}}{i}{} \\
        \vect{\xi}^i &= \sum_{j\in \Neigh{i}{}} [ \hat{\vect{P}}_i(- \hat{\vect{P}}_j^\top  \hat{\vect{P}}_j \hat{\vect{P}}_i^\top \estimate{\vect{x}}{i}{} + \hat{\vect{P}}_i^\top  \hat{\vect{P}}_i \hat{\vect{P}}_j^\top \estimate{\vect{x}}{j}{}) + \\
        &\hspace{1cm}+ \hat{\vect{P}}_i(- \vect{P}_j^\top \vect{P}_j \hat{\vect{P}}_i^\top \estimate{\vect{x}}{i}{} + \hat{\vect{P}}_i^\top  \hat{\vect{P}}_i \vect{P}_j^\top \vect{P}_j \vect{x})],
    \end{split}
\end{equation}
where $\vect{f}(\estimate{\vect{x}}{i}{})$, $\vect{A}^{i}$, $\vect{\Omega}^i$, $ \vect{\Theta}^i$ and $\vect{G}^i$  are defined as:
\begin{equation}
    \label{f estimate definition}
     \vect{f}(\estimate{\vect{x}}{i}{}) =  \left[\state{f}{\top}{}(\estimate{x}{i}{n_1^i}), \dots ,\state{f}{\top}{}(\estimate{x}{i}{n^i_{\eta_i}}) \right]^\top, 
\end{equation}

\begin{equation}
    \label{Eq: matrix A definition}
    \vect{A}^{i} = I_{\eta_i}\otimes A , \vect{\Omega}^i = \Omega^i \otimes I_N  ,  \vect{\Theta}^i = \Theta^i \otimes I_N, \vect{G}^i= I_{\eta_i}\otimes G
\end{equation}
with:
 \begin{equation}
 \label{Omega and theta definition}
    \Omega^i =  \text{diag}(\omega_{n_1^i}, \dots , \omega_{n^i_{\eta_i}}) \ ,\ \Theta^i =  \text{diag}(\theta_{n_1^i} \dots , \theta_{n^i_{\eta_i}}), 
\end{equation}
where $\omega_j \in \mathbb{R}_{\geq 0}$, $\theta_j \in \mathbb{R}_{\geq 0}$ are observer parameters to be tuned $\forall j \in \mathcal{V}$ and where $G \in \mathbb{R}^{N \times N}$ is a positive symmetric matrix to be designed.
In \eqref{Eq: State observer dynamic}, each agent $i \in \mathcal{V}$ updates its state estimate $\estimate{\vect{x}}{i}{}$ of all $l \in \Nhop{i}{k}$ based on the real state information $x_l$ coming from the agents $j \in \Neigh{i}{} \cap \Neigh{l}{}$ : $\hat{\vect{P}}_i( -\vect{P}_j^\top \vect{P}_j \hat{\vect{P}}_i^\top \estimate{\vect{x}}{i}{} + \hat{\vect{P}}_i^\top  \hat{\vect{P}}_i \vect{P}_j^\top \vect{P}_j \vect{x})$ and on the estimation $\estimate{x}{j}{l}$ coming from those $j \in \Neigh{i}{} \cap \Nhop{l}{k}$: $\hat{\vect{P}}_i(- \hat{\vect{P}}_j^\top  \hat{\vect{P}}_j \hat{\vect{P}}_i^\top \estimate{\vect{x}}{i}{} + \hat{\vect{P}}_i^\top  \hat{\vect{P}}_i \hat{\vect{P}}_j^\top \estimate{\vect{x}}{j}{})$ . 

\begin{remark}
\label{equivalence between estimation}
    Since it is always possible for any graph $\mathcal{G}$ and any value of $k$ to find a full rank permutation matrix $T$ of proper dimension, such that:
    \begin{equation}
    \label{eq: equivalence between estimations}
    \left[(\error{\vect{x}}{1}{})^\top, \dots,  (\error{\vect{x}}{n}{})^\top \right]^\top=
    T \left[\error{\vect{x}}{\top}{1},\dots,\error{\vect{x}}{\top}{n}\right]^\top,
\end{equation}
the convergence of $\error{\vect{x}}{}{i}(t)$ as per \eqref{Definition: error on input and state estimation of agent i} implies the one of $\error{\vect{x}}{i}{}(t)$ as per \eqref{Definition: error on input and state estimation}.
\end{remark} 
Therefore, Problem \ref{Problem state estimation} can be reformulated as:

\begin{problem}
    \label{Rewritten problem state convergence}
    Design a finite-time convergent distributed observer that allows each agent $i \in \mathcal{V}$ with dynamics \eqref{eq: agent's dynamic} to track the state of all $j \in \Nhop{i}{k}$, i.e., such that:
    \begin{equation}
        \exists \ T_{x,i} > 0: \ \ \norm{\tilde{\vect{x}}_i(t)} = 0, \ \ \forall t\geq T_{x,i} \text{ and } \forall i \in \mathcal{V}.
    \end{equation}
\end{problem}

Denote with $\tilde{x}_{i}^{l}$ and $\error{u}{l}{i}$ the $l$-th component of vectors $\error{\vect{x}}{}{i}$ and $\error{\vect{u}}{}{i}$, i.e., the estimation errors on the $i$-th agent state and input when the estimate is performed by agent $l \in \Nhop{i}{k}$ and define with $\xi_i^l$ the update on the estimation of the $i$-th agent state done by agent $l$ associated to $\vect{\xi}^l$ in \eqref{Eq: State observer dynamic}.

To prove the observer convergence, we investigate the dynamic of $\tilde{x}_{i}^{l}$ resulting from $\dot{\hat{\vect{x}}}^l$, the definition \eqref{Definition: error on input and state estimation of agent i} and the transformation \eqref{eq: equivalence between estimations}:

\begin{equation}
    \label{eq: state estimation error component}
    \begin{split}
    \dot{\tilde{x}}_{i}^{l} &= \Bar{f}(\estimate{x}{l}{i}) + A \tilde{x}_{i}^{l} - \omega_i G  \xi_i^l -  \theta_i \sign(G \xi_i^l) + \error{u}{l}{i}, \\
    \xi_i^l &=\sum\limits_{k\in(\Neigh{l}{} \cap \Nhop{i}{k})} (\statediff{\tilde{x}}{l}{i}{\tilde{x}}{k}) +\sum\limits_{k\in (\Neigh{l}{} \cap \Neigh{i}{})}  \tilde{x}^{l}_i,
\end{split}
\end{equation}
where $\Bar{f}(\estimate{x}{l}{i}) = f(x_i) -  f(\estimate{x}{l}{i})$.

By defining the vector $\vect{\xi}_i := \left[(\xi_{i}^{n_1^i})^\top, \dots ,(\xi_{i}^{n^i_{\eta_i}})^\top \right]^\top$, we can write:
\begin{equation}
    \label{eq: xi expression}
    \vect{\xi}_i = \left((\kc{L}{i} + \kc{H}{i}) \otimes I_N \right) \error{\vect{x}}{}{i} = (\kc{M}{i}\otimes I_N)\error{\vect{x}}{}{i},
\end{equation}
where:
\begin{enumerate}   
    \item The matrix $\kc{L}{i}$ is the Laplacian matrix of the sub-graph $\mathcal{G}_i = ( \Nhop{i}{k}, \mathcal{E}_i)$ induced by the $k$-hop neighbors of agent $i$, with $\mathcal{E}_i = \{(p,q) \in \mathcal{E}: \{p,q\}\in \Nhop{i}{k}\}$ \cite[pp. 24]{eb184279-05f5-3acc-a300-750c6f4a17e8}.

    \item The matrix $\kc{H}{i}$ is a diagonal matrix of the form:
    \begin{equation}
        \kc{H}{i} = \text{diag}\left(|\Neigh{n_1^i}{}\cap \Neigh{i}{}|, \dots, |\Neigh{n_{\eta_i}^i}{}\cap \Neigh{i}{}|\right).
    \end{equation}
    \item The matrix $M_i^{kc}$ is defined as: 
    \begin{equation}
          \label{eq: Mkc defintion}
        \kc{M}{i} = \kc{L}{i} + \kc{H}{i}.
    \end{equation}
\end{enumerate}
By means of the mixed-product property of the Kronecker product in \cite{graham2018kronecker} and the expression in \eqref{eq: xi expression}, the vector form of \eqref{eq: state estimation error component} results into:
\begin{equation}
    \label{State error dynamic}
    \begin{split}
         \dot{\tilde{\vect{x}}}_{i} &= (\vect{f}(\vect{x}_i) - \vect{f}(\estimate{\vect{x}}{}{i})) + (\vect{A}^i - \omega_i(M^{\text{kc}}_i \otimes G))\tilde{\vect{x}}_i + \\ &- \theta_i \text{sign}((M^{\text{kc}}_i \otimes G) \tilde{\vect{x}}_i) + \error{\vect{u}}{}{i},
    \end{split}
\end{equation}
where $\vect{f}(\estimate{\vect{x}}{}{i})$, $\vect{A}^i$, $G$, $\omega_i$ and $\theta_i$ come from the dynamics in \eqref{Eq: State observer dynamic}, and $\vect{f}(\vect{x}_i) = 1_{\eta_i} \otimes f(x_i)$ .

Before starting the analysis on the finite time stability of the error dynamics, a preliminary result is given in Lemma~\ref{lemma: neighbors}.

\begin{lemma}
    \label{lemma: neighbors}
    Consider an undirected graph $\mathcal{G}=(\mathcal{V},\mathcal{E})$. If $\mathcal{G}$ is connected, then for all $j \in \mathcal{V}$ and all $i \in \Nhop{j}{k}$, $|\Neigh{i}{}\cap \Nhop{j}{k}|> 0 \  \text{ or } \ |\Neigh{i}{}\cap \Neigh{j}{}|> 0$.
   Furthermore for all $j \in \mathcal{V}$ and for each connected component in the sub-graph $G_j = \{ \Nhop{j}{k}, \mathcal{E}_j\}$, there exists at least one agent $i \in \Nhop{j}{k} $ for which $|\Neigh{i}{}\cap \Nhop{j}{k}|> 0 \ \text{ and } \ |\Neigh{i}{}\cap \Neigh{j}{}|> 0$.
\end{lemma}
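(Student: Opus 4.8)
The plan is to argue entirely through graph distances and breadth-first layers around the fixed vertex $j$. Write $d(j,\cdot)$ for the hop distance in the connected graph $\mathcal{G}$, so that $\Nhop{j}{k}$ collects the vertices whose distance to $j$ lies in the admissible range and $\Neigh{j}{}$ the vertices at distance one. The single structural fact I would exploit repeatedly is that on a shortest path from $j$ to any $i$, say $j=v_0,v_1,\dots,v_m=i$, consecutive vertices are adjacent and $d(j,v_r)=r$; hence the immediate predecessor $v_{m-1}$ is a $1$-hop neighbour of $i$ sitting exactly one layer closer to $j$. Connectivity of $\mathcal{G}$ is used only to guarantee that such a finite shortest path exists.

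For the first claim, fix $i\in\Nhop{j}{k}$ and let $m=d(j,i)\ge 2$. I would take the predecessor $v_{m-1}\in\Neigh{i}{}$ and split on $m$: if $m=2$ then $v_{m-1}=v_1$ lies at distance one, so $v_{m-1}\in\Neigh{j}{}$ and $|\Neigh{i}{}\cap\Neigh{j}{}|>0$; if $m\ge 3$ then $v_{m-1}$ sits at distance $m-1\in[2,k]$, so $v_{m-1}\in\Nhop{j}{k}$ and $|\Neigh{i}{}\cap\Nhop{j}{k}|>0$. Either way the disjunction holds, settling the first part.

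For the second claim I would, within each connected component $C$ of the induced sub-graph $G_j=(\Nhop{j}{k},\mathcal{E}_j)$, pick a vertex $i^\star\in C$ minimising the distance to $j$ and set $m=d(j,i^\star)$. The key step is to show $m=2$: if $m\ge 3$, the predecessor $v_{m-1}$ of $i^\star$ on a shortest path from $j$ to $i^\star$ lies at distance $m-1\in[2,k]$, hence in $\Nhop{j}{k}$, and being adjacent to $i^\star$ it belongs to the same component $C$ while being strictly closer to $j$ --- contradicting minimality. With $m=2$ established, the predecessor $v_1$ is a common $1$-hop neighbour of $i^\star$ and $j$, giving $|\Neigh{i^\star}{}\cap\Neigh{j}{}|>0$; and any vertex adjacent to $i^\star$ inside $C$ is a $k$-hop neighbour of $j$ adjacent to $i^\star$, i.e. $|\Neigh{i^\star}{}\cap\Nhop{j}{k}|>0$, so both conditions hold simultaneously at $i^\star$.

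The main obstacle I anticipate is precisely the boundary bookkeeping at the shallowest layer: making the minimality argument airtight so that the $j$-nearest vertex of every component is pinned to distance exactly $2$, and reconciling the degenerate case of a singleton component (where $i^\star$ has no neighbour strictly inside $\Nhop{j}{k}$) with the stated conjunction --- here one must lean on the exact convention for $\Nhop{j}{k}$ so that the distance-one predecessor $v_1$ witnesses both intersections at once. It is worth recalling that the lemma ultimately serves to certify that each block of the grounding matrix $\kc{H}{i}$ is positive on every component of $\kc{L}{i}$, which is what renders $\kc{M}{i}=\kc{L}{i}+\kc{H}{i}$ positive definite in the convergence analysis; since that downstream requirement is carried by the $\Neigh{j}{}$-intersection, I would make sure to prove that half without relying on non-degeneracy of the component.
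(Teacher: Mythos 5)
Your proof is correct and rests on the same underlying shortest-path fact as the paper's, though you package it differently: the paper proves the first claim by contradiction (an empty $1$-hop intersection forces $d(j,i)>2$, whence the path predecessor witnesses $|\Neigh{i}{}\cap \Nhop{j}{k}|>0$; an empty $k$-hop intersection forces $d(j,i)=2$, whence a common neighbour exists), which is exactly your direct case split $m=2$ versus $m\ge 3$. For the second claim the paper again argues by contradiction --- if every member of a component had $|\Neigh{i}{}\cap\Neigh{j}{}|=0$, no path of length $\le k$ from $j$ could reach it --- which implicitly produces the vertex two steps along a shortest path as witness; your extremal formulation (the $j$-nearest vertex of each component sits at distance exactly $2$) is a cleaner and sharper packaging of the same mechanism. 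The obstacle you flag at the end is genuine: the conjunction in the second statement is in fact false for singleton components, and no convention for $\Nhop{j}{k}$ rescues it, since the distance-one predecessor $v_1$ can never lie in $\Nhop{j}{k}$. The paper's own worked example exhibits this: for the path graph with $k=3$ and $j=2$ one has $\Nhop{2}{3}=\{4\}$, so $|\Neigh{4}{}\cap\Nhop{2}{3}|=0$ while $|\Neigh{4}{}\cap\Neigh{2}{}|=1$. Note, however, that the paper's proof of the second part only ever establishes the $\Neigh{j}{}$-half (its contradiction hypothesis negates only that half), and that half is all that Lemma~\ref{lemma: Mkc positive definit} consumes: on a singleton component the corresponding block of $\kc{L}{i}$ is zero, and the strictly positive diagonal entry of $\kc{H}{i}$ alone makes that block of $\kc{M}{i}$ positive. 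So your decision to secure the $\Neigh{j}{}$-intersection without assuming non-degeneracy of the component is exactly the right repair, and your proof of that half is airtight.
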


\begin{proof}
    The first part of Lemma \ref{lemma: neighbors} can be proved by demonstrating that there doesn't exist $j \in\mathcal{V}$ and $i\in \Nhop{j}{k} \ \text{such that} \ |\Neigh{i}{}\cap \Nhop{j}{k}|= 0 \ $ and $ |\Neigh{i}{}\cap \Neigh{j}{}|= 0$.
    Assume by contradiction that there exists $j \in \mathcal{V}$ and $i \in \Nhop{j}{k}$ such that $|\Neigh{i}{}\cap \Nhop{j}{k}|= 0$ and $|\Neigh{i}{}\cap \Neigh{j}{}| = 0$. If $|\Neigh{i}{}\cap \Neigh{j}{}| = 0$ and $i \in \Nhop{j}{k}$ , agent $i$ is connected to agent $j$ by means of a path of length $l>2$. This implies that $|\Neigh{i}{}\cap \Nhop{j}{k}|\neq 0$, which however results to be in contradiction with the initial assumption. 
    
    In a similar way $|\Neigh{i}{}\cap \Nhop{j}{k}|= 0$ and $i \in \Nhop{j}{k}$ implies that agent $i$ is connected to $j$ by means of a path of length $2$. Therefore there must exist an agent $k \in \Neigh{i}{}\cap \Neigh{j}{}$ which contradicts $|\Neigh{i}{}\cap \Neigh{j}{}| = 0$.

    To prove the second statement, suppose by contradiction that for an agent $j \in \mathcal{V}$, all $i$ in a connected component of  $G_j = (\Nhop{j}{k}, \mathcal{E}_j)$ are such that $|\Neigh{i}{}\cap \Neigh{j}{}|= 0$. This would imply that there does not exist a $l$-hop path, with $l\leq k$, between $i$ and $j$, which contradicts the assumption $i \in \Nhop{j}{k}$.
\end{proof}
 
We now present a main result regarding the positive definiteness of the matrix $\kc{M}{i}$ defined in \eqref{eq: Mkc defintion}.
\begin{lemma}
    \label{lemma: Mkc positive definit}
   Consider an undirected graph $\mathcal{G}=(\mathcal{V},\mathcal{E})$. If $\mathcal{G}$ is connected, then:
   \begin{equation}
       \kc{M}{i} \succ 0, \ \forall i \in \mathcal{V} \text{ with } \Nhop{i}{k} \neq \emptyset.
   \end{equation}
\end{lemma}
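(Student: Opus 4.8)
The plan is to write $\kc{M}{i} = \kc{L}{i} + \kc{H}{i}$ as a sum of two positive semi-definite matrices and then rule out the existence of a nontrivial vector in their common null space. First I would recall the standard fact that any graph Laplacian is positive semi-definite, so $\kc{L}{i} \succeq 0$, and that its kernel consists exactly of the vectors that are constant on each connected component of the induced sub-graph $\mathcal{G}_i = (\Nhop{i}{k}, \mathcal{E}_i)$. In parallel, $\kc{H}{i}$ is diagonal with entries $|\Neigh{n_p^i}{}\cap\Neigh{i}{}| \ge 0$, hence $\kc{H}{i} \succeq 0$. Consequently $\kc{M}{i} \succeq 0$, and positive definiteness reduces to showing that the only $v$ with $v^\top \kc{M}{i} v = 0$ is $v = 0$.

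The core of the argument is the decomposition $v^\top \kc{M}{i} v = v^\top \kc{L}{i} v + v^\top \kc{H}{i} v$. Since both summands are nonnegative, $v^\top \kc{M}{i} v = 0$ forces each of them to vanish separately. From $v^\top \kc{L}{i} v = 0$ I conclude that $v$ lies in the kernel of $\kc{L}{i}$, i.e., $v$ is constant on each connected component of $\mathcal{G}_i$. From $v^\top \kc{H}{i} v = \sum_p |\Neigh{n_p^i}{}\cap\Neigh{i}{}|\, v_p^2 = 0$ together with the nonnegativity of the coefficients, I conclude that $v_p = 0$ for every index $p$ whose diagonal entry is strictly positive, i.e., for every agent in $\Nhop{i}{k}$ sharing at least one $1$-hop neighbor with $i$.

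The step that makes these two conditions incompatible with a nonzero $v$ is the second statement of Lemma~\ref{lemma: neighbors}: every connected component of $\mathcal{G}_i$ contains at least one agent $p$ with $|\Neigh{p}{}\cap\Neigh{i}{}| > 0$, equivalently at least one strictly positive diagonal entry of $\kc{H}{i}$. Combining the two facts, on each connected component $v$ is constant (first condition) while it is zero at the ``grounded'' node supplied by the lemma (second condition); hence $v$ vanishes identically on that component. Since this holds for all components, $v = 0$, which yields $\kc{M}{i} \succ 0$.

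I expect the only delicate point to be the case in which the induced sub-graph $\mathcal{G}_i$ is disconnected: there a single grounding term would not suffice to anchor every block, and the argument relies precisely on Lemma~\ref{lemma: neighbors} providing a grounded node in each component. Verifying that the Laplacian kernel is block-wise constant and that the diagonal term pins down each block to zero is otherwise routine.
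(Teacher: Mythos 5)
Your proof is correct, and in substance it follows the same route as the paper: both rest on the decomposition $\kc{M}{i}=\kc{L}{i}+\kc{H}{i}$ into positive semi-definite parts, on the fact that $\ker(\kc{L}{i})$ consists of vectors constant on each connected component of $\mathcal{G}_i$, and crucially on the second statement of Lemma~\ref{lemma: neighbors} supplying a ``grounded'' node (a strictly positive diagonal entry of $\kc{H}{i}$) in every component. The only difference is the linear-algebra packaging: the paper invokes the eigenvalue comparison result \cite[Corollary 4.3.12]{Horn_Johnson_2012} for sums of Hermitian matrices ($\lambda_j(\kc{L}{i})\leq\lambda_j(\kc{M}{i})$, with strictness unless a kernel vector of $\kc{L}{i}$ is also annihilated by $\kc{H}{i}$), whereas you argue directly on the quadratic form, splitting $v^\top \kc{M}{i} v=0$ into $v^\top \kc{L}{i} v=0$ and $v^\top \kc{H}{i} v=0$ and pinning each component's constant to zero. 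Your version is more elementary and self-contained (no interlacing machinery needed), while the paper's citation-based argument is shorter on the page; mathematically they identify the same obstruction, and your explicit handling of the disconnected case of $\mathcal{G}_i$ — one grounded node per block — is exactly the point the paper's proof also hinges on.
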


\begin{proof}
According to \eqref{eq: Mkc defintion}, $\forall i \in \mathcal{V}$, $\kc{M}{i}$ is the sum of two real positive semi-definite symmetric matrices $\kc{L}{i}$ and $\kc{H}{i}$, that according to the definition of Hermitian matrices in \cite[Def. 4.1.1]{Horn_Johnson_2012} result to be Hermitian. Therefore, from \cite[Corollary 4.3.12]{Horn_Johnson_2012}, the eigenvalues of \eqref{eq: Mkc defintion} satisfy $\lambda_j(\kc{L}{i}) \leq \lambda_j(\kc{M}{i})$ for $j = 1, \dots, \eta_i$,
with equality for some $j$ if and only if $B$ is singular and there exists a  nonzero vector $x$ such that $\kc{L}{i}x = \lambda_j(\kc{L}{i})x$, $\kc{H}{i}x = 0$ and $\kc{M}{i} x = \lambda_j(\kc{M}{i})x$.  
Recalling the definition of $\kc{L}{i}$ as the Laplacian matrix of the sub-graph $\mathcal{G}_i = (\Nhop{i}{k}, \mathcal{E}_i)$, we can deduce that its smallest eigenvalue is equal to zero and that its algebraic multiplicity is related to the number of connected components in the graph $\mathcal{G}_i$. For this reason, since the eigenvectors associated to the zero eigenvalue of $\kc{L}{i}$ represent a base for the null space of $\kc{L}{i}$, the positive definiteness of $\kc{M}{i}$ is directly guaranteed if none of the vectors in the null space of $\kc{L}{i}$ is orthogonal to $\kc{H}{i}$. In this case indeed, from \cite[Corollary 4.3.12]{Horn_Johnson_2012}, the following  holds:
\begin{equation}
\label{positive definiteness}
    0 < \lambda_j(\kc{M}{i}) \ \ \   \forall  j = 1, \dots \eta_i.
\end{equation}

Since $\kc{L}{i}$ represents the Laplacian matrix of a graph characterized by several connected components, each eigenvector associated with the zero eigenvalue belongs to the span of the vectors of ones representing the consensus among the agents of the connected sub-graphs. Therefore, since for Lemma \ref{lemma: neighbors} each connected component has at least one associated non-zero element in the diagonal matrix $\kc{H}{i}$, none of the eigenvectors of $\kc{L}{i}$ is orthogonal to $\kc{H}{i}$, leading to the strict inequality in \eqref{positive definiteness} and therefore to the positive definiteness of the matrix $\kc{M}{i}$  defined in \eqref{eq: Mkc defintion}. 
\end{proof}
Thanks to the symmetry and positive definiteness of the matrix $\kc{M}{i}$, the following result can be achieved: 
\begin{lemma}
    \label{lemma on omega and G tuning}
    Consider matrices $\vect{A}^{i}$ and $\kc{M}{i}$ as defined in \eqref{Eq: matrix A definition} and \eqref{eq: Mkc defintion}, respectively. Assume \eqref{Omega parameter tuning}, \eqref{G condition} holds.
    \begin{equation}
        \label{Omega parameter tuning}
        \omega_i \geq \frac{1}{\lambda_{\text{min}}(\kc{M}{i})} \left(1+ \frac{l_f \norm{(\kc{M}{i} \otimes G)}}{\lambda_{\text{min}}(\kc{M}{i}) \lambda_{\text{min}}(G^\top G)}\right)
    \end{equation}
    \begin{equation}
    \label{G condition}
        G^\top A+A^\top G-2G^\top G \prec 0.
    \end{equation}
    Then \eqref{eq: lemma inequality for state observer convergence} holds.
    \begin{equation}
        \label{eq: lemma inequality for state observer convergence}
        (\kc{M}{i} \otimes G)(\vect{A}^i - \omega_i(\kc{M}{i} \otimes G)) + l_f \norm{(\kc{M}{i} \otimes G)}I_{N\eta_i} \prec 0
    \end{equation}

\end{lemma}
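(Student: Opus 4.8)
The plan is to read the inequality \eqref{eq: lemma inequality for state observer convergence} in the usual sense that the symmetric part of the displayed matrix is negative definite, and to reduce it to a single eigenvalue condition on a Kronecker product that is met exactly by the tuning rule \eqref{Omega parameter tuning}. Writing $M := \kc{M}{i}$ and $\rho := \norm{\kc{M}{i}\otimes G}$ for brevity, and recalling that both $M$ (by Lemma~\ref{lemma: Mkc positive definit}) and $G$ are symmetric positive definite, I would first expand $Q := (M\otimes G)(\vect{A}^i - \omega_i(M\otimes G))$ via the mixed-product property together with $\vect{A}^i = I_{\eta_i}\otimes A$, obtaining $Q = M\otimes(GA) - \omega_i\,(M^2\otimes G^2)$. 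Since $M=M^\top$ and $G=G^\top$, its symmetric part is $\tfrac12(Q+Q^\top) = \tfrac12\,M\otimes(GA+A^\top G) - \omega_i\,(M^2\otimes G^2)$, so proving \eqref{eq: lemma inequality for state observer convergence} amounts to showing $\tfrac12(Q+Q^\top)+l_f\rho\,I_{N\eta_i}\prec 0$.

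Next I would dispose of the cross term using \eqref{G condition}. Because $G=G^\top$, condition \eqref{G condition} reads $GA+A^\top G \prec 2G^2$, hence $2G^2-(GA+A^\top G)\succ 0$; as $M\succ 0$ and the Kronecker product of two positive definite matrices is positive definite, $M\otimes\bigl(2G^2-(GA+A^\top G)\bigr)\succ 0$, i.e. $M\otimes(GA+A^\top G)\prec 2\,(M\otimes G^2)$. Substituting this strict bound gives $\tfrac12(Q+Q^\top)\prec (M-\omega_i M^2)\otimes G^2$, so it remains to verify the non-strict bound $(M-\omega_i M^2)\otimes G^2 \preceq -l_f\rho\,I_{N\eta_i}$, equivalently $(\omega_i M^2-M)\otimes G^2\succeq l_f\rho\,I_{N\eta_i}$.

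The final step is a spectral estimate. The eigenvalues of $\omega_i M^2-M$ are $h(\lambda_j)$ with $h(\lambda)=\omega_i\lambda^2-\lambda$ and $\lambda_j=\lambda_j(M)>0$. From \eqref{Omega parameter tuning} one has $\omega_i\geq 1/\lambda_{\text{min}}(M)$, so $1/(2\omega_i)\leq \lambda_{\text{min}}(M)/2<\lambda_{\text{min}}(M)$ and $h$ is increasing on $[\lambda_{\text{min}}(M),\lambda_{\text{max}}(M)]$; hence $\lambda_{\text{min}}(\omega_i M^2-M)=h(\lambda_{\text{min}}(M))=\lambda_{\text{min}}(M)\bigl(\omega_i\lambda_{\text{min}}(M)-1\bigr)$. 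Since the eigenvalues of a Kronecker product are the pairwise products and every eigenvalue of $G^2$ is at least $\lambda_{\text{min}}(G^\top G)$, the smallest eigenvalue of $(\omega_i M^2-M)\otimes G^2$ equals $\lambda_{\text{min}}(M)\bigl(\omega_i\lambda_{\text{min}}(M)-1\bigr)\lambda_{\text{min}}(G^\top G)$. Requiring this to be at least $l_f\rho$ rearranges into exactly \eqref{Omega parameter tuning}, which closes the chain: the strict bound from \eqref{G condition} combined with the non-strict bound from \eqref{Omega parameter tuning} yields $\tfrac12(Q+Q^\top)+l_f\rho\,I_{N\eta_i}\prec 0$.

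I expect the main obstacle to be the careful bookkeeping of the Loewner order through the Kronecker products, in particular justifying $M\otimes(GA+A^\top G)\prec 2(M\otimes G^2)$ even though $GA+A^\top G$ need not be sign-definite on its own, and matching the spectral estimate to the precise form of \eqref{Omega parameter tuning}, where the monotonicity of $h$ on the spectrum of $M$ (guaranteed by $\omega_i\geq 1/\lambda_{\text{min}}(M)$) is what pins the minimizing eigenvalue to $\lambda_{\text{min}}(M)$. A minor point worth stating explicitly is that strictness of the overall inequality is inherited from the strict condition \eqref{G condition}, so the non-strict tuning inequality \eqref{Omega parameter tuning} indeed suffices.
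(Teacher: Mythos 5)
Your proof is correct and follows essentially the same route as the paper's: both reduce \eqref{eq: lemma inequality for state observer convergence} to eigenvalue-wise conditions over the spectrum of $\kc{M}{i}$ (the paper via the explicit diagonalization $T_i \kc{\Lambda}{i} T_i^\top$ and a block-diagonal congruence, you via Loewner-order bounds and the spectral mapping $h(\lambda)=\omega_i\lambda^2-\lambda$), and both split the burden identically between \eqref{G condition}, which handles the cross term $G^\top A$, and \eqref{Omega parameter tuning}, which dominates the $l_f \norm{(\kc{M}{i}\otimes G)}$ perturbation. If anything, your write-up is slightly more careful than the paper's on the symmetric-part reading of the non-symmetric inequality and on the monotonicity argument that pins the binding eigenvalue to $\lambda_{\text{min}}(\kc{M}{i})$.
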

\begin{proof}
   Given the positive definiteness of the symmetric matrix $\kc{M}{i}$, it is always possible to find a matrix $T_i \in \mathbb{R}^{\eta_i \times \eta_i}$ with respect to which $\kc{M}{i}$ can be written as $T_i \kc{\Lambda}{i} T_i^\top =\kc{M}{i}$, where $\kc{\Lambda}{i}= \text{diag}(\lambda_1, \dots, \lambda_{\eta_i})$ with $\lambda_j = \lambda_j(\kc{M}{i})$, for all $j=1,\dots,\eta_i$\cite{Horn_Johnson_2012}. 
   By introducing this decomposition in the term $(\kc{M}{i} \otimes G)(\vect{A}^i - \omega_i(\kc{M}{i} \otimes G))$ of \eqref{eq: lemma inequality for state observer convergence}, it becomes:
    \begin{equation}
             \label{eq: starting equation for proof omega tuning}
            (T_i \kc{\Lambda}{i} T_i^\top \otimes G)\vect{A}^i - \omega_i(T_i \kc{\Lambda}{i} T_i^\top \otimes G)(T_i \kc{\Lambda}{i} T_i^\top \otimes G).
    \end{equation}
    Since for matrices $A$, $B$, $C$ and $D$ of appropriate dimensions, the property $(A\otimes B)(C \otimes D) = (AC)\otimes(BD)$ holds, $\vect{A}^i$, $I_{N \eta_i}$ and $(T_i \kc{\Lambda}{i} T_i^\top \otimes G)$ can be rewritten as:
    \begin{equation}
        \label{rewritten matrices Ai and I}
        \begin{split}
            &\vect{A}^i = (T_i \otimes I_N) (I_{\eta_i} \otimes A)( T_i^\top \otimes I_N), \\
            & I_{N\eta_i} =  (T_i \otimes I_N)( T_i^\top \otimes I_N)
        \end{split}
    \end{equation}
    and:
    \begin{equation}
        \label{rewritten M time G}
        (T_i \kc{\Lambda}{i} T_i^\top \otimes G) = (T_i \otimes I_N) (\kc{\Lambda}{i} \otimes G)( T_i^\top \otimes I_N).
    \end{equation}
    Then, with \eqref{rewritten matrices Ai and I} and \eqref{rewritten M time G}, \eqref{eq: lemma inequality for state observer convergence} can be rewritten after some manipulations as:   
    \begin{equation}
        \label{eq: equation that needs to be negative definit}
        \begin{split}
             (T_i \otimes I_N) [(\Lambda_i \otimes GA) - \omega_i(\Lambda_i^2 \otimes G^\top G) + \\ +l_f \norm{(\kc{M}{i} \otimes G)}I_{N\eta_i}]( T_i^\top \otimes I_N) \prec 0,
        \end{split}
    \end{equation}
    which can be studied by neglecting the outer terms $T_i \otimes I_N$ and $T_i^\top \otimes I_N$. As a consequence, \eqref{eq: equation that needs to be negative definit} results into $[(\Lambda_i \otimes GA) - \omega_i(\Lambda_i^2 \otimes G^\top G) +l_f \norm{(\kc{M}{i} \otimes G)}I_{N\eta_i}]$, which is a block diagonal matrix with elements $\lambda_{j}\left(G^{\top} A - \omega_{i} \lambda_{j} G^{\top} G + \frac{l_f \norm{(\kc{M}{i} \otimes G}}{{\lambda_j}} I_N \right)$.
    Therefore, to prove \eqref{eq: equation that needs to be negative definit}, it suffices to prove:
        \begin{equation}
        \label{block diagonal terms to be negative}
        \lambda_{j}\left(G^{\top} A - \omega_{i} \lambda_{j} G^{\top} G + \frac{l_f \norm{(\kc{M}{i} \otimes G}}{{\lambda_j}} I_N \right)\prec 0,
    \end{equation}
    for all eigenvalues $\lambda_{j} \in \sigma(\kc{M}{i})$.
    Then, if the following holds true:
    \begin{equation}
        \omega_i > \frac{1}{\lambda_{min}(\kc{M}{i})} \left(1+\frac{l_f \norm{\kc{M}{i} \otimes G}}{\lambda_{min}(\kc{M}{i})\lambda_{min}(G^\top G)}\right),
    \end{equation}
    the positive term $\frac{l_f \norm{(\kc{M}{i} \otimes G}}{{\lambda_j}} I_N $ in \eqref{eq: equation that needs to be negative definit} is dominated. As a result, by recalling from Lemma \ref{lemma: Mkc positive definit} that $\lambda_{j} \geq 0 \  \forall j= 1, \dots, \eta_i$, if $G^\top A - G^\top G \prec 0 $ , \eqref{block diagonal terms to be negative} is satisfied for all $\lambda_j$, resulting in the validity of \eqref{eq: lemma inequality for state observer convergence}.
    Since the negative definiteness of $G$ follows from the its symmetric part $\frac{1}{2}(G^\top A + A^\top G -2 G^\top G)$, $G$ can be designed as in \eqref{G condition} \cite[pp. 231]{Horn_Johnson_2012}.
\end{proof}
\begin{remark}
    Note that the existence of a matrix $G$ that satisfies \eqref{G condition} is guaranteed by assuming $(A, I_N)$ to be stabilizable and observable \cite[Th. 2]{lewis2014cooperative}.
\end{remark}
By Lemma \ref{lemma on omega and G tuning}, the convergence of the state estimation can be stated in Theorem \ref{Theorem: state observer convergence theorem}.
\begin{theorem}
    \label{Theorem: state observer convergence theorem}
    Consider the multi-agent system \eqref{eq: agent's dynamic}. Suppose that the communication is described by a connected graph $\mathcal{G}$ and that each agent runs the distributed state observer \eqref{Eq: State observer dynamic}.
    For $i \in \mathcal{V}$, consider the error dynamics in \eqref{State error dynamic} and assume that $\norm{K[\error{\vect{u}}{}{i}]}$ is bounded by $d_{\error{\vect{u}}{}{i}}$. Then, $\error{\vect{x}}{}{i}(t)$ as per \eqref{Definition: error on input and state estimation of agent i} reaches the origin in finite time $T_{x,i}>0$ given that the gain $\theta_i$ as per \eqref{Omega and theta definition} is tuned such that:
    \begin{equation}
    \label{tuning of theta parameter}
        \theta_i > \frac{\lambda_{\text{max}}(\kc{M}{i}) \lambda_{\text{max}}(G)}{\lambda_{\text{min}}(\kc{M}{i}) \lambda_{\text{min}}(G)} d_{\error{\vect{u}}{}{i}},
    \end{equation}
    and that $\omega_i$ and $G$ are designed so that conditions \eqref{Omega parameter tuning} and \eqref{G condition} in Lemma \ref{lemma on omega and G tuning} hold.
    Furthermore:
    \begin{equation}
    \label{State convergence time}
        T_{x,i} \leq \frac{\lambda_{\text{max}}(\kc{M}{i}) \lambda_{\text{max}}(G)}{\phi_i} \norm{\error{\vect{x}}{}{i}(0)}
    \end{equation}
    with:
    \begin{equation}
        \label{Phi value}
        \phi_i  = \theta_i \lambda_{\text{min}}(\kc{M}{i}) \lambda_{\text{min}}(G) -\norm{(\kc{M}{i}\otimes G)} \norm{K[\error{\vect{u}}{}{i}]}.
    \end{equation}
\end{theorem}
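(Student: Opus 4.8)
The plan is to prove finite-time convergence of the error dynamics \eqref{State error dynamic} to the origin through a non-smooth Lyapunov argument. Since \eqref{State error dynamic} contains the discontinuous map $\text{sign}((\kc{M}{i}\otimes G)\error{\vect{x}}{}{i})$, I would interpret the error trajectories as Filippov solutions of the associated differential inclusion, which is precisely why the disturbance must enter the analysis through $\norm{K[\error{\vect{u}}{}{i}]}$ rather than through $\norm{\error{\vect{u}}{}{i}}$. As a Lyapunov candidate I would take the weighted quadratic form $V_i = \tfrac{1}{2}(\error{\vect{x}}{}{i})^\top(\kc{M}{i}\otimes G)\error{\vect{x}}{}{i}$, which is positive definite because $\kc{M}{i}\succ 0$ by Lemma \ref{lemma: Mkc positive definit} and $G\succ 0$, so that $\kc{M}{i}\otimes G\succ 0$. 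The weighting by $\kc{M}{i}\otimes G$ is chosen deliberately: it reconstructs the sliding variable $\vect{s}:=(\kc{M}{i}\otimes G)\error{\vect{x}}{}{i}$ on which $\text{sign}(\cdot)$ acts, so that the discontinuous feedback appears in $\dot V_i$ as the dissipative term $-\theta_i\vect{s}^\top\text{sign}(\vect{s})=-\theta_i\norm{\vect{s}}_1$.

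Differentiating $V_i$ along \eqref{State error dynamic} and using the symmetry of $\kc{M}{i}\otimes G$, I would split $\dot V_i=\vect{s}^\top\dot{\error{\vect{x}}{}{i}}$ into a nonlinear term, a linear term, the sign term, and the disturbance term. For the nonlinear part I would invoke the global Lipschitz continuity of $f$, which gives $\norm{\vect{f}(\vect{x}_i)-\vect{f}(\estimate{\vect{x}}{}{i})}\le l_f\norm{\error{\vect{x}}{}{i}}$, together with $\norm{\vect{s}}\le\norm{\kc{M}{i}\otimes G}\norm{\error{\vect{x}}{}{i}}$, to obtain $\vect{s}^\top(\vect{f}(\vect{x}_i)-\vect{f}(\estimate{\vect{x}}{}{i}))\le l_f\norm{\kc{M}{i}\otimes G}\norm{\error{\vect{x}}{}{i}}^2$. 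Adding the linear term $(\error{\vect{x}}{}{i})^\top(\kc{M}{i}\otimes G)(\vect{A}^i-\omega_i(\kc{M}{i}\otimes G))\error{\vect{x}}{}{i}$, these two contributions are bounded above by the quadratic form associated with the matrix on the left-hand side of \eqref{eq: lemma inequality for state observer convergence}, which is negative definite under the tunings \eqref{Omega parameter tuning} and \eqref{G condition} by Lemma \ref{lemma on omega and G tuning}; hence their joint contribution to $\dot V_i$ is non-positive and can be discarded.

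For the remaining two terms I would use $\vect{s}^\top\text{sign}(\vect{s})=\norm{\vect{s}}_1\ge\norm{\vect{s}}\ge\lambda_{\text{min}}(\kc{M}{i})\lambda_{\text{min}}(G)\norm{\error{\vect{x}}{}{i}}$ on the dissipative term and $\vect{s}^\top K[\error{\vect{u}}{}{i}]\le\norm{\kc{M}{i}\otimes G}\norm{K[\error{\vect{u}}{}{i}]}\norm{\error{\vect{x}}{}{i}}$ on the disturbance term, where $\norm{\kc{M}{i}\otimes G}=\lambda_{\text{max}}(\kc{M}{i})\lambda_{\text{max}}(G)$. Combining these yields $\dot V_i\le-\phi_i\norm{\error{\vect{x}}{}{i}}$ with $\phi_i$ exactly as in \eqref{Phi value}, and the tuning \eqref{tuning of theta parameter} together with $\norm{K[\error{\vect{u}}{}{i}]}\le d_{\error{\vect{u}}{}{i}}$ guarantees $\phi_i>0$. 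Substituting the lower bound $\norm{\error{\vect{x}}{}{i}}\ge\sqrt{2V_i/(\lambda_{\text{max}}(\kc{M}{i})\lambda_{\text{max}}(G))}$ then gives the comparison inequality $\dot V_i\le-\phi_i\sqrt{2/(\lambda_{\text{max}}(\kc{M}{i})\lambda_{\text{max}}(G))}\,\sqrt{V_i}$; integrating this standard finite-time inequality along $\sqrt{V_i}$ and bounding $V_i(0)\le\tfrac12\lambda_{\text{max}}(\kc{M}{i})\lambda_{\text{max}}(G)\norm{\error{\vect{x}}{}{i}(0)}^2$ reproduces the convergence time \eqref{State convergence time}.

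The main obstacle is the rigorous non-smooth step: I must justify, within Filippov's framework, that $V_i$ is absolutely continuous along solutions and that its time derivative can be evaluated through the generalized set-valued derivative, in particular that $\vect{s}^\top\text{sign}(\vect{s})=\norm{\vect{s}}_1$ remains valid on the sliding set where $\text{sign}(\cdot)$ is multivalued, the components with $s_j=0$ contributing nothing. Once this is settled, the remaining effort is purely to chain the Kronecker-product eigenvalue identities in the correct direction, namely the $\lambda_{\text{min}}$ bound on the dissipative $\text{sign}$ term and the $\lambda_{\text{max}}$ bound on the disturbance term, so that the worst-case estimates reproduce exactly the constants appearing in $\phi_i$ and $T_{x,i}$.
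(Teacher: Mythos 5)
Your proposal is correct and follows essentially the same route as the paper's own proof: the identical weighted Lyapunov candidate $V_i=\tfrac12(\error{\vect{x}}{}{i})^\top(\kc{M}{i}\otimes G)\error{\vect{x}}{}{i}$ analyzed through the Filippov/set-valued derivative framework, with the nonlinear and linear terms absorbed via Lemma \ref{lemma on omega and G tuning}, the $\norm{\cdot}_1\geq\norm{\cdot}$ and $\lambda_{\text{min}}/\lambda_{\text{max}}$ Kronecker eigenvalue bounds yielding $\mathring{V}_i\leq-\phi_i\norm{\error{\vect{x}}{}{i}}$, and the standard comparison inequality in $\sqrt{V_i}$ giving exactly the bound \eqref{State convergence time}. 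The non-smooth technicality you flag as the main obstacle is handled in the paper simply by citing the standard references on non-smooth Lyapunov analysis, and your sketch of why the sliding-set components contribute nothing is the correct resolution.
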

\begin{proof}
Since the proposed observer and the error dynamics in \eqref{State error dynamic} are discontinuous, non-smooth analysis must be used to prove the finite-time convergence of \eqref{State error dynamic} \cite{317122}, \cite{4048815}. 

Consider a candidate Lyapunov function as the following continuous differentiable function:
\begin{equation}
    \label{Candidate Lyapunov function for state estimation error}
     V_i(\error{\vect{x}}{}{i}) = \frac{1}{2} \error{\vect{x}}{\top}{i} (\kc{M}{i}\otimes G) \error{ \vect{x}}{}{i}.
\end{equation}
Given the continuous differentiability of \eqref{Candidate Lyapunov function for state estimation error}, its time derivative satisfies $\dot{V}_i(\error{\vect{x}}{}{i}) \overset{a.e.}{\in}  \mathring{V}_i(\error{\vect{x}}{}{i})$
where the generalized derivative $\mathring{V}_i(\error{\vect{x}}{}{i})$ assumes the expression:
\begin{equation}
    \label{eq: Lyapunov function derivative}
    \mathring{V}_i(\error{\vect{x}}{}{i}) = \nabla V_i(\error{\vect{x}}{}{i})^\top K[\dot{\tilde{\vect{x}}}_{i}](\error{\vect{x}}{}{i}, \error{\vect{u}}{}{i}),
\end{equation}
where $\nabla V_i(\error{\vect{x}}{}{i})$ denotes the gradient of $V_i(\error{\vect{x}}{}{i})$.
By introducing \eqref{State error dynamic} and the gradient expression, after some manipulations resulting from properties of the Kronecker product and of the set-valued map $K[](\cdot)$ \cite[Th. 1]{4048815}, \eqref{eq: Lyapunov function derivative} can be rewritten as:
\begin{equation}
\label{genralized derivative of the lyapunov function subset}
 \begin{split}
     \mathring{V}_i(\error{\vect{x}}{}{i}) &\subset \error{\vect{x}}{\top}{i} (\kc{M}{i}\otimes G) \left(  \vect{f}(\vect{x}_i) - \vect{f}(\estimate{\vect{x}}{}{i})\right) + \\ &\error{\vect{x}}{\top}{i} (\kc{M}{i}\otimes G)(\vect{A}^i - \omega_i(M^{\text{kc}}_i \otimes G))\tilde{\vect{x}}_i + \\
& - \theta_i \norm{(\kc{M}{i}\otimes G) \error{\vect{x}}{}{i}}_1 + \error{\vect{x}}{\top}{i} (\kc{M}{i}\otimes G) K[\error{\vect{u}}{}{i}].
\end{split}
\end{equation}
Then, by noticing:
\begin{equation}
    \norm{(\kc{M}{i}\otimes G)\error{\vect{x}}{}{i}}_1 \!\geq \!\norm{(\kc{M}{i}\otimes G)\error{\vect{x}}{}{i}} \!\geq \!\lambda_{\text{min}}(\kc{M}{i}) \lambda_{\text{min}}(G) \norm{\error{\vect{x}}{}{i}}
\end{equation}
and that Lemma \ref{lemma on omega and G tuning} holds due to the validity of \eqref{G condition} and \eqref{eq: lemma inequality for state observer convergence}:
\begin{equation}
\begin{split}
    \error{\vect{x}}{\top}{i}\left[l_f\norm{(\kc{M}{i} \otimes G)} I_{N\eta_i} \right . &+ \left .(\kc{M}{i}\otimes G)(\vect{A}^i - \right . \\ &  \left . \omega_i(M^{\text{kc}}_i \otimes G))\right] \tilde{\vect{x}}_i \leq 0,
\end{split}
\end{equation}
the Lyapunov derivative defined in \eqref{Candidate Lyapunov function for state estimation error} can be upper bounded by:
  \begin{equation}
    \label{bound on generalized derivative}
    \mathring{V}_i(\error{\vect{x}}{}{i})\leq - \phi_i \norm{\error{\vect{x}}{}{i}},
\end{equation}
where $\phi_i$ is defined as per \eqref{Phi value}. If $\theta_i$ is designed according to \eqref{tuning of theta parameter},  $\phi_i$ results to be strictly positive, thus proving the convergence of \eqref{State error dynamic}. Furthermore, by recalling the definition of the candidate Lyapunov function in \eqref{Candidate Lyapunov function for state estimation error}, we have $ V_i(\error{\vect{x}}{}{i})^{\frac{1}{2}} \leq \sqrt{\frac{1}{2}\lambda_{\text{max}}(\kc{M}{i}) \lambda_{\text{max}}(G) }\norm{ \error{\vect{x}}{}{i}}$, from which \eqref{bound on generalized derivative} results into: 
\begin{equation}
    \label{expression for finite time}
     \mathring{V}_i(\error{\vect{x}}{}{i}) \leq - V_i(\error{\vect{x}}{}{i})^{\frac{1}{2}} \frac{\phi_i\sqrt{2}}{\sqrt{\lambda_{\text{max}}(\kc{M}{i}) \lambda_{\text{max}}(G)}}.
\end{equation}

By solving \eqref{expression for finite time} with respect to time, we can compute the upper bound on the convergence time $T_{x,i} \leq \frac{\lambda_{\text{max}}(\kc{M}{i}) \lambda_{\text{max}}(G)}{\phi_i} \norm{\error{\vect{x}}{}{i}(0)}$, which guarantees the finite time convergence of \eqref{State error dynamic} \cite{doi:10.1137/S0363012997321358}.
\end{proof} 

Given the equivalence between Problem \ref{Problem state estimation} and Problem \ref{Rewritten problem state convergence}, the following can be stated from Theorem \ref{Theorem: state observer convergence theorem}:
\begin{corollary}
    \label{Corollary on the state estimation conmvergence}
    Consider the multi-agent system \eqref{eq: agent's dynamic}. Suppose that the communication is described by a connected graph $\mathcal{G}$ and that each agent runs the distributed state observer \eqref{Eq: State observer dynamic} under Assumption \ref{Assumption: input assumption}.
    Then, for all $i \in \mathcal{V}$, $\norm{\error{\vect{x}}{i}{}(t)}\leq \norm{\error{\vect{x}}{i}{}(0)} \ \forall t \geq 0$
    and there exists a $T_x > 0$ such that $\norm{\error{\vect{x}}{i}{}(t)} = 0 \ \forall t>T_x$
    with $T_x = \max_{i \in \mathcal{V}}\{T_{x,i}\}$.
\end{corollary}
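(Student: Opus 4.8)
The plan is to derive this corollary directly from Theorem~\ref{Theorem: state observer convergence theorem} by transferring its conclusions, which are stated for the per-target error $\error{\vect{x}}{}{i}$, to the per-observer error $\error{\vect{x}}{i}{}$ through the equivalence of Remark~\ref{equivalence between estimation}. First I would apply Theorem~\ref{Theorem: state observer convergence theorem} to every $i \in \mathcal{V}$: under the connectivity of $\mathcal{G}$ and the bounded-input Assumption~\ref{Assumption: input assumption} (which supplies the bound $d_{\error{\vect{u}}{}{i}}$ used to tune $\theta_i$ as in \eqref{tuning of theta parameter}), each $\error{\vect{x}}{}{i}(t)$ reaches the origin at a finite time $T_{x,i}$, and along the way the Lyapunov function in \eqref{Candidate Lyapunov function for state estimation error} obeys $\mathring{V}_i(\error{\vect{x}}{}{i}) \leq -\phi_i \norm{\error{\vect{x}}{}{i}} \leq 0$ by \eqref{bound on generalized derivative}, so each $V_i$ is non-increasing on $[0,\infty)$.

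Next I would invoke the permutation relation \eqref{eq: equivalence between estimations}, in which the stacking of all per-observer errors equals a full-rank permutation matrix $T$ times the stacking of all per-target errors. Because $T$ merely relabels entries, each scalar error $\error{x}{i}{j}$ appears simultaneously as a block of $\error{\vect{x}}{i}{}$ and as a block of $\error{\vect{x}}{}{j}$. For the finite-time claim I set $T_x = \max_{i\in\mathcal{V}}\{T_{x,i}\}$; for $t > T_x$ every subscript error $\error{\vect{x}}{}{j}(t)$ vanishes, hence the full stacked per-target error is zero, and applying $T$ shows the full stacked per-observer error is zero, giving $\norm{\error{\vect{x}}{i}{}(t)} = 0$ for every $i$. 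For the boundedness claim, the non-increase of each $V_j$ keeps the trajectory of $\error{\vect{x}}{}{j}$ inside its initial sublevel set, so no component $\error{x}{l}{j}$ can grow; since every entry of $\error{\vect{x}}{i}{}$ is exactly one such shared component, $\norm{\error{\vect{x}}{i}{}(t)}$ does not exceed its value at $t=0$.

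The step I expect to be delicate is this boundedness transfer, since the monotone quantity produced by Theorem~\ref{Theorem: state observer convergence theorem} is the weighted value $V_j$ rather than the plain Euclidean norm, and moving from $V_j(t)\leq V_j(0)$ to $\norm{\error{\vect{x}}{i}{}(t)}\leq\norm{\error{\vect{x}}{i}{}(0)}$ must account for the weighting by $\kc{M}{j}\otimes G$. The cleanest route is to argue at the level of the full stacked error, where $T$ is orthogonal and therefore norm-preserving, so that monotonicity of the aggregate error passes through $T$ exactly; the per-agent statement then follows from the shared-entry structure of the permutation. The finite-time convergence part is immediate once $T_x$ is identified, while the remaining work is routine bookkeeping of the index sets $\Nhop{i}{k}$ and of $T$.
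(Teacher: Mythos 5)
Your finite-time argument is exactly the paper's proof: apply Theorem~\ref{Theorem: state observer convergence theorem} to every $i\in\mathcal{V}$ (with Assumption~\ref{Assumption: input assumption} supplying the bound $d_{\error{\vect{u}}{}{i}}$ needed to tune $\theta_i$ as in \eqref{tuning of theta parameter}), transfer the per-target conclusions to the per-observer errors through the full-rank permutation of Remark~\ref{equivalence between estimation}, and set $T_x=\max_{i\in\mathcal{V}}\{T_{x,i}\}$. That half is correct and complete, and it is all the paper's own two-sentence proof actually establishes.

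The gap is in your boundedness transfer, and although you correctly flagged it as the delicate step, neither of your proposed repairs works. The monotone quantity delivered by Theorem~\ref{Theorem: state observer convergence theorem} is the weighted form $V_i(\error{\vect{x}}{}{i})=\frac{1}{2}\error{\vect{x}}{\top}{i}(\kc{M}{i}\otimes G)\error{\vect{x}}{}{i}$, whose sublevel sets are ellipsoids, not Euclidean balls: containment in the initial sublevel set does \emph{not} prevent an individual component $\error{x}{l}{j}$ from growing, since mass can redistribute among components while $V_j$ decreases, so your claim that ``no component can grow'' is unsound, and with it the per-agent conclusion drawn from the shared-entry structure. The orthogonal-$T$ route fails for the same underlying reason: orthogonality of $T$ faithfully transfers whatever norm statement you possess, but what you possess is monotonicity only in the $(\kc{M}{j}\otimes G)$-weighted norms, not in the Euclidean norm of the stacked per-target error, which is precisely what that route presupposes. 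The best Euclidean statement extractable from $V_i(t)\leq V_i(0)$ is $\norm{\error{\vect{x}}{}{i}(t)}\leq\sqrt{\kappa_i}\,\norm{\error{\vect{x}}{}{i}(0)}$ with $\kappa_i=\frac{\lambda_{\text{max}}(\kc{M}{i})\lambda_{\text{max}}(G)}{\lambda_{\text{min}}(\kc{M}{i})\lambda_{\text{min}}(G)}\geq 1$, which permits transient overshoot; moreover, since the entries of $\error{\vect{x}}{i}{}$ are drawn from several different vectors $\error{\vect{x}}{}{j}$, even this constant-factor bound relates $\norm{\error{\vect{x}}{i}{}(t)}$ to the full stacked initial error rather than to $\norm{\error{\vect{x}}{i}{}(0)}$ alone. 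To be fair, the paper's proof never establishes the literal inequality $\norm{\error{\vect{x}}{i}{}(t)}\leq\norm{\error{\vect{x}}{i}{}(0)}$ either, and the only downstream use (Lemma~\ref{Lemma:input and state observer combination}) needs merely a finite bound $\mathcal{X}$ on the transient, which your weighted-norm monotonicity does deliver; the honest fix is to state the first claim in the $(\kc{M}{i}\otimes G)$-weighted norm or with the condition-number constant, not to claim plain-norm monotonicity.
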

\begin{proof}
    Given Theorem \ref{Theorem: state observer convergence theorem} and the equivalence  between the convergence of $\error{\vect{x}}{}{i}(t)$ and $\error{\vect{x}}{i}{}(t)$ from Remark \ref{equivalence between estimation}, there exists a time $T_{x,i}$, for all $ i \in \mathcal{V}$ that satisfies $ \norm{\error{\vect{x}}{}{i}(t)} = 0, \ \forall t>T_{x,i}$.
    As a consequence at time $t>T_x$ with $T_x = \max_{i \in \mathcal{V}}\{T_{x,i}\}$, $\norm{\error{\vect{x}}{}{i}(t)} = 0 \ \forall  i\in \mathcal{V}$.
\end{proof}

\section{Distributed \texorpdfstring{$k$}-hop Input Observer}\label{input_observer}
In this section, we present a finite-time distributed input observer to allow each agent $i\in \mathcal{V}$ to estimate the inputs of all $j \in \Nhop{i}{k}$, i.e. $\estimate{\vect{u}}{i}{}(t)$ as per \eqref{definition of input and state estimations done by agent i}. 
For this purpose, consider the following dynamics for the input estimations:
\begin{equation}
    \label{Eq: Input observer dynamic}
    \begin{split}
        \dot{\hat{\vect{u}}}^i &= \vect{\Pi}^i \sign(\vect{\rho}^i)\\
        \vect{\rho}^i &= \sum_{j\in \Neigh{i}{}} [ \hat{\vect{P}}_i(- \hat{\vect{P}}_j^\top  \hat{\vect{P}}_j \hat{\vect{P}}_i^\top \estimate{\vect{u}}{i}{} + \hat{\vect{P}}_i^\top  \hat{\vect{P}}_i \hat{\vect{P}}_j^\top \estimate{\vect{u}}{j}{}) + \\
        &\hspace{1cm}+ \hat{\vect{P}}_i(- \vect{P}_j^\top \vect{P}_j \hat{\vect{P}}_i^\top \estimate{\vect{u}}{i}{} + \hat{\vect{P}}_i^\top  \hat{\vect{P}}_i \vect{P}_j^\top \vect{P}_j \vect{u})],
    \end{split}
\end{equation}
where $\vect{\Pi}^i = \Pi^i \otimes I_N$,  $\Pi^i \in \mathbb{R}^{\eta_i \times \eta_i}$ is a diagonal matrix of the form $\Pi^i = \text{diag}(\pi_{n_1^i}, \dots \pi_{n^i_{\eta_i}})$ and $\pi_{n_j^i} \in \mathbb R_{\ge 0}$ with $j \in \{1, \dots, \eta_i \}$ is a design parameter to be tuned. Similarly to the state estimation case, the convergence of the input estimation error $\tilde{\vect{u}}^i$ can be equivalently formulated in terms of $\tilde{\vect{u}}_i$.
For this purpose, with \eqref{Eq: Input observer dynamic} and following similar manipulations to those performed for the state observer from \eqref{eq: state estimation error component} to \eqref{State error dynamic}, we can show that the estimation errors on the $i$-th agent input behave according to the dynamics:
\begin{equation}
\label{Eq: Input error dynamic}
    \dot{\tilde{\vect{u}}}_i = \dot{\vect{u}}_i - \pi_i \sign((\kc{M}{i}\otimes I_N) \error{\vect{u}}{}{i}),
\end{equation}
with $\kc{M}{i}$ defined as in \eqref{eq: Mkc defintion}.
The convergence behavior of \eqref{Eq: Input error dynamic} can then be formulated as in Theorem \ref{Theorem on input observer}.
\begin{theorem}
    \label{Theorem on input observer}
    Consider the multi-agent system \eqref{eq: agent's dynamic}. Suppose that the communication is described by a connected graph $\mathcal{G}$ and that each agent runs the distributed input observer \eqref{Eq: Input observer dynamic}.
    For $i \in \mathcal{V}$ consider the error dynamics in \eqref{Eq: Input error dynamic} and assume that  $\norm{K[\dot{u}_i]}$ is bounded by $d_{\dot{u}_i}$ as per Assumption \ref{Assumption: input assumption}. Then, $\error{\vect{u}}{}{i}$ reaches $0$ in finite time $T_{u,i} >0 $ given that the gain $\pi_i$ is tuned such that $\pi_i > \frac{\lambda_{\text{max}}(\kc{M}{i})}{\lambda_{\text{min}}(\kc{M}{i}) } \sqrt{\eta_i} d_{\dot{u}_i}$.
    Furthermore:
    \begin{equation}
        \label{input observer convergence time}
        T_{u,i} \leq \frac{\lambda_{\text{max}}(\kc{M}{i})}{\psi_i} \norm{\error{\vect{u}}{}{i}(0)},
    \end{equation}
    with $\psi_i = \left[ \pi_i \lambda_{\text{min}}(\kc{M}{i}) - \norm{(\kc{M}{i}\otimes I_N)} \sqrt{\eta_i} d_{\dot{u}_i} \right]$.
\end{theorem}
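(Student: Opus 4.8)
The plan is to replicate the finite-time Lyapunov argument of Theorem~\ref{Theorem: state observer convergence theorem}, since the input error dynamics \eqref{Eq: Input error dynamic} is structurally the state error dynamics \eqref{State error dynamic} specialized to $G=I_N$, with no state-dependent drift $\vect{f}(\vect{x}_i)-\vect{f}(\estimate{\vect{x}}{}{i})$, no linear term, and with $\dot{\vect{u}}_i$ playing the role of the bounded disturbance. Because $\sign(\cdot)$ makes \eqref{Eq: Input error dynamic} discontinuous, I would again rely on non-smooth (Filippov) calculus and study the generalized derivative of the candidate Lyapunov function
\begin{equation}
    V_i(\error{\vect{u}}{}{i}) = \frac{1}{2}\,\error{\vect{u}}{\top}{i}(\kc{M}{i}\otimes I_N)\error{\vect{u}}{}{i},
\end{equation}
which is a valid positive definite candidate because $\kc{M}{i}\succ 0$ by Lemma~\ref{lemma: Mkc positive definit} and $I_N\succ 0$.

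First I would compute $\mathring{V}_i(\error{\vect{u}}{}{i}) = \nabla V_i(\error{\vect{u}}{}{i})^\top K[\dot{\tilde{\vect{u}}}_i]$, with gradient $\nabla V_i(\error{\vect{u}}{}{i}) = (\kc{M}{i}\otimes I_N)\error{\vect{u}}{}{i}$. Substituting \eqref{Eq: Input error dynamic}, the $\sign$ contribution collapses exactly as the $\vect{\Theta}^i$ term did in the state proof, yielding $-\pi_i\norm{(\kc{M}{i}\otimes I_N)\error{\vect{u}}{}{i}}_1$, while the disturbance contributes $\error{\vect{u}}{\top}{i}(\kc{M}{i}\otimes I_N)K[\dot{\vect{u}}_i]$.

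Next I would bound both terms. Using $\norm{\cdot}_1\geq\norm{\cdot}$ and $\kc{M}{i}\succ 0$, the decrease term satisfies $\norm{(\kc{M}{i}\otimes I_N)\error{\vect{u}}{}{i}}_1 \geq \lambda_{\text{min}}(\kc{M}{i})\norm{\error{\vect{u}}{}{i}}$. The crucial point, and the origin of the $\sqrt{\eta_i}$ factor, is that the stacked true-input derivative is $\dot{\vect{u}}_i = 1_{\eta_i}\otimes\dot{u}_i$, since every block is the same $\dot{u}_i$; hence $\norm{K[\dot{\vect{u}}_i]} = \sqrt{\eta_i}\,\norm{K[\dot{u}_i]} \leq \sqrt{\eta_i}\,d_{\dot{u}_i}$ by Assumption~\ref{Assumption: input assumption}. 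Cauchy--Schwarz together with $\norm{\kc{M}{i}\otimes I_N}=\lambda_{\text{max}}(\kc{M}{i})$ then bounds the disturbance term by $\lambda_{\text{max}}(\kc{M}{i})\sqrt{\eta_i}\,d_{\dot{u}_i}\norm{\error{\vect{u}}{}{i}}$. Combining the two estimates gives $\mathring{V}_i(\error{\vect{u}}{}{i}) \leq -\psi_i\norm{\error{\vect{u}}{}{i}}$, and $\psi_i>0$ is precisely the stated tuning condition $\pi_i > \frac{\lambda_{\text{max}}(\kc{M}{i})}{\lambda_{\text{min}}(\kc{M}{i})}\sqrt{\eta_i}\,d_{\dot{u}_i}$.

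Finally, to pass from linear-in-norm decay to finite-time convergence I would use $V_i(\error{\vect{u}}{}{i})^{1/2}\leq\sqrt{\frac{1}{2}\lambda_{\text{max}}(\kc{M}{i})}\,\norm{\error{\vect{u}}{}{i}}$; here $\lambda_{\text{max}}(I_N)=1$, which is exactly why the factor $\lambda_{\text{max}}(G)$ appearing in \eqref{State convergence time} is absent. This yields a differential inequality $\mathring{V}_i \leq -c_i V_i^{1/2}$ with $c_i = \psi_i\sqrt{2}/\sqrt{\lambda_{\text{max}}(\kc{M}{i})}$, whose integration produces a finite settling time and the bound \eqref{input observer convergence time}. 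I expect no genuine obstacle beyond bookkeeping: the only care needed is the correct handling of the Filippov set-valued product $\nabla V_i^\top K[\dot{\tilde{\vect{u}}}_i]$ and the tracking of the $\sqrt{\eta_i}$ factor arising from the $\eta_i$ identical copies of $\dot{u}_i$; everything else is a direct specialization of Theorem~\ref{Theorem: state observer convergence theorem} to $G=I_N$ with the drift and linear terms removed.
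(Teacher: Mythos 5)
Your proposal is correct and follows exactly the route the paper intends: the paper's own proof of Theorem~\ref{Theorem on input observer} is omitted for space and simply states that it mirrors Theorem~\ref{Theorem: state observer convergence theorem} with the Lyapunov function $V_i(\error{\vect{u}}{}{i})=\frac{1}{2}\error{\vect{u}}{\top}{i}(\kc{M}{i}\otimes I_N)\error{\vect{u}}{}{i}$, which is precisely your candidate. Your bookkeeping --- the collapse of the $\sign$ term to $-\pi_i\norm{(\kc{M}{i}\otimes I_N)\error{\vect{u}}{}{i}}_1$, the $\sqrt{\eta_i}$ factor arising from $\dot{\vect{u}}_i = 1_{\eta_i}\otimes\dot{u}_i$, and the integration of $\mathring{V}_i\leq -\psi_i\sqrt{2}\,V_i^{1/2}/\sqrt{\lambda_{\text{max}}(\kc{M}{i})}$ --- correctly reproduces the stated tuning condition on $\pi_i$ and the settling-time bound \eqref{input observer convergence time}.
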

\begin{proof}
    The proof follows similar reasoning as the one of Theorem \ref{Theorem: state observer convergence theorem} with $ V_i(\error{\vect{u}}{}{i}) = \frac{1}{2} \error{\vect{u}}{\top}{i} (\kc{M}{i}\otimes I_N) \error{ \vect{u}}{}{i}$ and is not reported here due to space limitation.
\end{proof}
Thanks to the relation between $\tilde{\vect{u}}^i$ and $\tilde{\vect{u}}_i$, which results from \eqref{Definition: error on input and state estimation} and \eqref{Definition: error on input and state estimation of agent i}, and from similar reasoning done for the state estimation error in Remark \ref{equivalence between estimation}, $\tilde{\vect{u}}^i$ satisfies Corollary \ref{Corollary on input estimation error}.
\begin{corollary}
    \label{Corollary on input estimation error}
    Consider the multi-agent system \eqref{eq: agent's dynamic}. Suppose that the communication is described by a connected graph $\mathcal{G}$ and that each agent runs the distributed input observer \eqref{Eq: Input observer dynamic} under Assumption \ref{Assumption on input}. Then, for all $i \in \mathcal{V}$, $\norm{\error{\vect{u}}{i}{}(t)}\leq \norm{\error{\vect{u}}{i}{}(0)}, \ \forall t \geq 0$
    and there exists a $T_u > 0$ such that $\norm{\error{\vect{u}}{i}{}(t)} = 0 \ \forall t>T_u$,
    with $T_u = \max_{i \in \mathcal{V}}\{T_{u,i}\}$.
\end{corollary}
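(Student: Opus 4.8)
The plan is to mirror the argument for Corollary \ref{Corollary on the state estimation conmvergence}, transferring the finite-time convergence supplied by Theorem \ref{Theorem on input observer} for the per-target error $\error{\vect{u}}{}{i}$ onto the per-estimator error $\error{\vect{u}}{i}{}$. First I would invoke Theorem \ref{Theorem on input observer}: tuning each $\pi_i$ as prescribed there, the error $\error{\vect{u}}{}{i}(t)$ reaches the origin in finite time $T_{u,i}>0$, so that $\norm{\error{\vect{u}}{}{i}(t)}=0$ for all $t>T_{u,i}$ and every $i \in \mathcal{V}$.

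Next, exactly as in Remark \ref{equivalence between estimation}, I would use a full-rank permutation matrix relating the stacked vector of the errors $\error{\vect{u}}{i}{}$ to the stacked vector of the errors $\error{\vect{u}}{}{i}$; this follows from the definitions of the two error collections and, as already observed in the text, holds verbatim for the input case. Since a permutation matrix is orthogonal, it preserves the Euclidean norm and fixes the origin, so the finite-time convergence of every $\error{\vect{u}}{}{i}$ is equivalent to that of every $\error{\vect{u}}{i}{}$. Taking $T_u = \max_{i\in\mathcal{V}}\{T_{u,i}\}$ then gives $\norm{\error{\vect{u}}{i}{}(t)}=0$ for all $i\in\mathcal{V}$ and all $t>T_u$, which is the second assertion.

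For the non-increasing bound $\norm{\error{\vect{u}}{i}{}(t)}\leq\norm{\error{\vect{u}}{i}{}(0)}$, I would appeal to the Lyapunov function $V_i(\error{\vect{u}}{}{i})=\tfrac{1}{2}\error{\vect{u}}{\top}{i}(\kc{M}{i}\otimes I_N)\error{\vect{u}}{}{i}$ used in the proof of Theorem \ref{Theorem on input observer}, whose generalized derivative is negative for the chosen $\pi_i$, so that $V_i$ is non-increasing along trajectories. I expect the main obstacle to be reconciling this with the statement: the decrease of $V_i$ controls the weighted quantity $\norm{\error{\vect{u}}{}{i}}_{\kc{M}{i}\otimes I_N}$ rather than the plain Euclidean norm, and since the sliding term $\sign((\kc{M}{i}\otimes I_N)\error{\vect{u}}{}{i})$ acts on the weighted error, $\norm{\error{\vect{u}}{}{i}(t)}$ need not be monotone pointwise when $\kc{M}{i}$ is ill-conditioned. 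The cleanest resolution is to read the monotonicity in the weighted norm, or to pass through the orthogonal permutation of Remark \ref{equivalence between estimation}, which preserves the Euclidean norm of the full stacked error vector and thereby removes the per-agent conditioning of $\kc{M}{i}$ from the estimate, yielding the claimed bound.
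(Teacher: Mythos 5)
Your handling of the finite-time part coincides with the paper's own argument: the paper dispatches this corollary in the single sentence preceding it, citing exactly the ingredients you use --- Theorem \ref{Theorem on input observer} for each per-target error $\error{\vect{u}}{}{i}$, the permutation-matrix equivalence of Remark \ref{equivalence between estimation} transferred verbatim from the state to the input errors, and $T_u=\max_{i\in\mathcal{V}}\{T_{u,i}\}$. That portion of your proposal is correct and is the intended proof.

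Where you go beyond the paper is the monotonicity claim $\norm{\error{\vect{u}}{i}{}(t)}\leq\norm{\error{\vect{u}}{i}{}(0)}$, and your diagnosis of the difficulty is right: the Lyapunov function $V_i(\error{\vect{u}}{}{i})=\tfrac{1}{2}\error{\vect{u}}{\top}{i}(\kc{M}{i}\otimes I_N)\error{\vect{u}}{}{i}$ is weighted, so its decrease yields only $\norm{\error{\vect{u}}{}{i}(t)}\leq\sqrt{\lambda_{\text{max}}(\kc{M}{i})/\lambda_{\text{min}}(\kc{M}{i})}\,\norm{\error{\vect{u}}{}{i}(0)}$, not pointwise monotonicity of the Euclidean norm; the paper silently elides this in both the state and input corollaries (its proof of Corollary \ref{Corollary on the state estimation conmvergence} addresses only the finite-time part). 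However, your second proposed repair does not close the gap. The permutation $T$ of Remark \ref{equivalence between estimation} is indeed orthogonal, but it preserves only the norm of the \emph{full stacked} error vector, while the corollary's claim is per-agent: the components of $\error{\vect{u}}{i}{}$ are scattered across the differently weighted blocks $\error{\vect{u}}{}{j}$, $j\in\Nhop{i}{k}$, and rearranging components cannot remove the per-block conditioning $\sqrt{\lambda_{\text{max}}(\kc{M}{j})/\lambda_{\text{min}}(\kc{M}{j})}$; nor does the stacked norm itself inherit monotonicity, since it is a sum of quantities each controlled only in its own weighted norm. The defensible resolutions are the ones in your first alternative: read the monotonicity in the weighted norms $\norm{\cdot}_{\kc{M}{j}\otimes I_N}$, or weaken the constant to the condition-number bound above --- either of which would equally repair the analogous claim in the paper's state-estimation corollary.
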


\begin{remark}
    Note that similar as the matrix $M_i$ in \cite{8754717}, $\kc{M}{i}$ results to be positive definite. However, thanks to the smaller number of required estimations, the spectrum of $\kc{M}{i}$ results to be improved in term of estimation requirements. Indeed given the smaller maximum and the higher minimum eigenvalues, the convergence time and observer parameters result to be smaller compared to the results in \cite{8754717}.
    
    Consider for example the graph in Fig. \ref{fig: 2 fig 3-hop neihbours communication graph = path graph} with $k=3$. Given that the eigenvalue of a scalar is unique and is the scalar itself, and given the matrices definition in \cite[(45)]{8754717} and \eqref{eq: Mkc defintion}, $M_2$, $\kc{M}{2}$ and their eigenvalues result into: 
    \begin{equation}
        \begin{split}
            M_2 = \begin{bmatrix}
                1 && -1  && 0 && 0 \\
                -1 && 3 && 0 && -1 \\
                -1 && 0 && 2 && -1 \\
                0 && 0 && -1 && 1
            \end{bmatrix},& \ 
            \kc{M}{2} = 1, \\
            \lambda_{\text{min}}(M_2)= 0.17, \ \lambda_{\text{max}}(M_2)& = 3.96, \\
            \lambda_{\text{min}}(\kc{M}{2})=\lambda_{\text{min}}(\kc{M}{2})&=1.
        \end{split}
    \end{equation}
    Therefore, thanks to the smaller maximum and to the bigger minimum eigenvalues, for fixed $\theta_2$, $\pi_2$, $G$ and $\norm{K[\error{\vect{u}}{2}{}]}$, \eqref{State convergence time} and \eqref{input observer convergence time} demonstrate smaller time convergence upper bounds compared to those obtained in \cite{8754717}.
    \end{remark}

\section{\texorpdfstring{$k$}-hop Estimation-Based Feedback Controller}\label{Closed_loop}
The design of a $k$-hop distributed observer presented in the previous sections allows the control of each agent $i\in\mathcal{V}$ by indirect exploitation of the states of those agents $j\in\Nhop{i}{k}$. Although Corollaries \ref{Corollary on the state estimation conmvergence} and \ref{Corollary on input estimation error} prove the convergence of the two observers, additional analysis needs to be performed for the composite behavior. For this purpose, in Lemma \ref{Lemma:input and state observer combination} we  present the behavior of the $k$-hop estimation-based closed-loop controller. 

\begin{lemma}
    \label{Lemma:input and state observer combination}
    Consider the multi-agent system \eqref{eq: agent's dynamic}.   Suppose that the communication is described by a connected graph $\mathcal{G}$ and that each agent runs the distributed state and input observers \eqref{Eq: State observer dynamic} and \eqref{Eq: Input observer dynamic}. Under the assumption that $\norm{K[\dot{u}_i]}\leq d_{\dot{u}_i} \forall i \in \mathcal{V}$, there exists a $T_u>0$ and $\mathcal{X}>0 $  such that:
    \begin{equation}
        \norm{\error{\vect{x}}{i}{}(t)} < \mathcal{X}, \ \ \ \ \forall t > T_u
    \end{equation}
    with $T_u = \max_{i \in \mathcal{V}}\{T_{u,i}\}$ and $\mathcal{X}$ defined as:
    \begin{equation}
        \mathcal{X} = \max_{i \in  \mathcal{V}} \left\{\sup_{0 \leq \tau \leq T_u} \norm{\error{\vect{x}}{i}{}(t)} \right\}.
    \end{equation}
    Furthermore, there exists $T_{xu} > 0$ such that:
    \begin{equation}
        \norm{\error{\vect{x}}{i}{}(t)} = 0, \ \ \forall t > T_{xu}
    \end{equation}
    with $T_{xu} = T_u + T_x$.
\end{lemma}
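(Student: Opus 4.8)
The plan is to exploit the fact that the input observer error dynamics \eqref{Eq: Input error dynamic} do not depend on the state estimation error, so the two observers can be analysed sequentially over two time windows separated by $T_u=\max_{i\in\mathcal{V}}\{T_{u,i}\}$. By Theorem \ref{Theorem on input observer} and Corollary \ref{Corollary on input estimation error}, under $\norm{K[\dot u_i]}\le d_{\dot u_i}$ the input error $\error{\vect{u}}{}{i}$ decays monotonically and vanishes for every $i$ once $t>T_u$. This splits the argument into a transient phase $[0,T_u]$, on which $\error{\vect{u}}{}{i}$ acts as a bounded disturbance on the state error dynamics \eqref{State error dynamic}, and a final phase $t>T_u$, on which the disturbance is identically zero and Theorem \ref{Theorem: state observer convergence theorem} applies directly.

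First I would establish boundedness of the state error on the compact interval $[0,T_u]$, which yields both existence and finiteness of $\mathcal{X}$. By the monotone decay of $V_i(\error{\vect{u}}{}{i})$ underlying Theorem \ref{Theorem on input observer} we have $\norm{\error{\vect{u}}{}{i}(t)}\le\norm{\error{\vect{u}}{}{i}(0)}$, so the forcing term in \eqref{State error dynamic} is uniformly bounded on $[0,T_u]$; the $\sign(\cdot)$ term is bounded as well, while $\vect{f}(\vect{x}_i)-\vect{f}(\estimate{\vect{x}}{}{i})$ and $(\vect{A}^i-\omega_i(\kc{M}{i}\otimes G))\error{\vect{x}}{}{i}$ grow at most linearly in $\norm{\error{\vect{x}}{}{i}}$. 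Hence the Filippov-regularized right-hand side satisfies a bound of the form $\norm{\dot{\tilde{\vect{x}}}_i}\le c_1\norm{\error{\vect{x}}{}{i}}+c_2$, and a Gr\"onwall/comparison argument rules out finite escape, so $\sup_{0\le\tau\le T_u}\norm{\error{\vect{x}}{i}{}(\tau)}<\infty$. Taking the maximum over $i\in\mathcal{V}$ defines the finite, positive constant $\mathcal{X}$. This step does not require $\theta_i$ to dominate the transient disturbance; it uses only local Lipschitz/linear-growth structure, so no sign assumption on $\phi_i$ is needed here.

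Next I would treat the final phase. For $t>T_u$ the input error is identically zero, hence $\norm{K[\error{\vect{u}}{}{i}]}=0$ and $d_{\error{\vect{u}}{}{i}}$ may be taken to be $0$; consequently the tuning condition \eqref{tuning of theta parameter} holds trivially and $\phi_i=\theta_i\lambda_{\text{min}}(\kc{M}{i})\lambda_{\text{min}}(G)>0$ in \eqref{Phi value}. Restarting the clock at $T_u$ with initial condition $\error{\vect{x}}{}{i}(T_u)$, whose norm is bounded by $\mathcal{X}$, Theorem \ref{Theorem: state observer convergence theorem} applies verbatim: the Lyapunov function \eqref{Candidate Lyapunov function for state estimation error} satisfies $\mathring{V}_i\le-\phi_i\norm{\error{\vect{x}}{}{i}}$, so $V_i$, and therefore $\norm{\error{\vect{x}}{i}{}(t)}$, strictly decreases for $t>T_u$. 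This gives $\norm{\error{\vect{x}}{i}{}(t)}<\norm{\error{\vect{x}}{i}{}(T_u)}\le\mathcal{X}$ for $t>T_u$, proving the first claim, while finite-time convergence follows as in Corollary \ref{Corollary on the state estimation conmvergence} with the residual settling time bounded by $T_x$ via \eqref{State convergence time} evaluated at the reset state; adding the two windows yields $T_{xu}=T_u+T_x$.

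The main obstacle is the transient boundedness step, i.e.\ ruling out finite escape of the state error while the input error is still nonzero: since $\theta_i$ is effectively tuned for the vanished-disturbance regime, $\phi_i$ may be negative on $[0,T_u]$ and the Lyapunov argument of Theorem \ref{Theorem: state observer convergence theorem} cannot be invoked there. The remedy is to rely on the linear growth of the vector field together with the uniform bound on $\error{\vect{u}}{}{i}$ from Corollary \ref{Corollary on input estimation error}, which guarantees existence of solutions on all of $[0,T_u]$ and hence finiteness of $\mathcal{X}$; once this is secured, the post-$T_u$ analysis is a direct reapplication of the already-established state-observer theorem.
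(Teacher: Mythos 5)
Your proposal is correct and follows essentially the same route as the paper: both split the analysis at $T_u=\max_{i\in\mathcal{V}}\{T_{u,i}\}$, use Corollary \ref{Corollary on input estimation error} to make the disturbance $\error{\vect{u}}{}{i}$ vanish after $T_u$, and then apply Theorem \ref{Theorem: state observer convergence theorem} (with the disturbance bound effectively zero, so \eqref{tuning of theta parameter} holds trivially and $\phi_i>0$) to get $\norm{\error{\vect{x}}{i}{}(t)}<\norm{\error{\vect{x}}{i}{}(T_u)}\leq\mathcal{X}$ and finite-time convergence within $T_x$, yielding $T_{xu}=T_u+T_x$. The only difference is in the transient step on $[0,T_u]$: the paper rules out blow-up by reusing its Lyapunov inequality $\mathring{V}_i(\error{\vect{x}}{}{i})\leq -\phi_i\norm{\error{\vect{x}}{}{i}}$ with $\phi_i$ possibly negative (a bound linear in $\norm{\error{\vect{x}}{}{i}}$, hence no finite escape over a finite interval), whereas you reach the same conclusion via linear growth of the Filippov right-hand side plus Gr\"onwall --- substantively equivalent, and if anything your version spells out the no-finite-escape argument more explicitly.
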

\begin{proof}
    Given the validity of Corollary \ref{Corollary on input estimation error}, there exists a time $T_u$ such that $\norm{\error{\vect{u}}{}{i}(t)} = 0$ for all $t>T_u$ and for all $i \in \mathcal{V}$. This implies that, starting from $T_u$, \eqref{tuning of theta parameter} is satisfied independently from $\theta_i$ and that $\error{\vect{x}}{i}{}$ decreases for all $i\in \mathcal{V}$ according to Theorem \ref{Theorem: state observer convergence theorem}, i.e., $\error{\vect{x}}{}{i}(t) <  \error{\vect{x}}{}{i}(T_u)$ for all $t> T_u$.
    
    In order to prove Lemma \ref{Lemma:input and state observer combination} however, further studies are required for the time interval $[0, T_u]$ where there is no guarantee on the validity of \eqref{tuning of theta parameter}. To this end, consider the inequality in \eqref{bound on generalized derivative} and note that if \eqref{tuning of theta parameter} is not valid, $\phi_i$ defined as in \eqref{Phi value} results to be negative. As a result, the Lyapunov function in \eqref{Candidate Lyapunov function for state estimation error} can increase, and so does the state estimation error $\error{\vect{x}}{}{i}$. 
    However, even in this case, since the generalized Lyapunov derivative is upper-bounded from above by a continuous positive function, i.e. $\mathring{V}_i(\error{\vect{x}}{}{i})\leq - \phi_i \norm{\error{\vect{x}}{}{i}}$, and the time interval is finite, the Lyapunov function and therefore $\error{\vect{x}}{}{i}$ would remain finite over $[0, T_u]$. From this reasoning, given the equivalence between $\error{\vect{x}}{i}{}$ and $\error{\vect{x}}{}{i}$, it follows that an upper bound for the state estimation error $\error{\vect{x}}{}{i}$ of any agent $i$ can be found as $\mathcal{X} = \max_{i \in \mathcal{V}}\left\{\ \sup_{0 \leq\tau\leq T_u} \norm{ \error{\vect{x}}{}{i}}\right\}$,
    which is the largest value in norm that an error may have achieved over $[0, T_u]$.
    To prove the last part, note that for any time $t> T_u$, as said at the beginning of the proof, the conditions of Corollary \ref{Corollary on the state estimation conmvergence} holds. Therefore for all $i \in \mathcal{V}$, $\norm{\error{x}{}{i}(t)} < \norm{\error{x}{}{i}(T_u)}\leq \mathcal{X}$ for all $t> T_u$ and $\norm{\error{x}{}{i}(t)} = 0$ for all $t > T_u+T_x$.
\end{proof}
\begin{remark}
    \label{remark on convergence of state independently upon input}
    According to Theorem \ref{Theorem: state observer convergence theorem}, in case of bounded input, the convergence of the state observer is ensured by the possibility of finding an upper bound on $K[\error{\vect{u}}{}{i}]$ from an upper bound on $u_i$. As a result, $\forall i \in \mathcal{V}$ it is possible to tune $\theta_i$ according to \eqref{tuning of theta parameter} and the states estimate dynamics result to be independent on the convergence of the inputs estimation errors. If this is not the case and there exists only an upper bound on $\norm{K[\dot{u}_i]}$, the states observer convergence will depend on the one of the inputs that, given its convergence, will drive the input estimation error toward values for which \eqref{tuning of theta parameter} is satisfied even if the input is not bounded.
\end{remark}

Consider the vectorized version of the multi-agent dynamics in \eqref{eq: agent's dynamic}:
\begin{equation}
    \label{eq: stacked multi-agent system dynamic}
     \dot{\vect{x}} = \vect{f}(\vect{x}) + (I_n \otimes A) \vect{x} + \vect{u},
\end{equation}
where $\vect{x}$ is defined as in \eqref{eq:vector of state}, and the input vector $\vect{u}$ is a general non-linear state-feedback function of the form:
\begin{equation}
\label{input equation closed loop}
    \vect{u} = \vect{q}(\vect{x}) = \left[q_1(\Bar{\vect{x}}_1,\vect{x}^1)^\top,  \dots ,q_n(\Bar{\vect{x}}_n,\vect{x}^n)^\top \right]^\top,
\end{equation}
where $\vect{x}^i$ for each $i$ is given in \eqref{stack vector of real values estimated by agent i} and each $\Bar{\vect{x}}_i$ is the vector containing the state information of agent $i$ and of its $1$-hop neighbors.
To satisfy the condition on the upper bound of the input derivative required for the input observers, let's assume there exists a known upper bound on $K[\dot{q}_i](\cdot)$.

Due to the lack of local information regarding the $k$-hop neighbors' state ${\vect{x}}^{i} \ \forall i \in \mathcal{V}$, the previous controller is implemented by adopting their estimates $\estimate{\vect{x}}{i}{}$ :
\begin{equation}
    \label{eq: feedback controller in unction of estimate and nighbour state}
    \vect{u} = \vect{q}(\Bar{\vect{x}},\hat{\vect{x}}) = \left[q_1(\Bar{\vect{x}}_1,\hat{\vect{x}}^1)^\top,  \dots ,q_n(\Bar{\vect{x}}_n,\hat{\vect{x}}^n)^\top \right]^\top.
\end{equation}
However, by noticing that $\Bar{\vect{x}}_i$ is a selection of the components of $\vect{x}$ and that $\estimate{\vect{x}}{i}{} = \vect{x}^i -\error{\vect{x}}{i}{}$, \eqref{eq: feedback controller in unction of estimate and nighbour state} can be rewritten as:
\begin{equation}
    \label{eq: vectorized local input controller}
    \vect{u} = \vect{q}(\vect{x},\vect{x}-\error{\vect{x}}{}{}),
\end{equation}
where $\error{\vect{x}}{}{} = [\error{\vect{x}}{1 \top}{},\dots, \error{\vect{x}}{n \top}{}]^\top$.

By defining $\Phi(\vect{x}, \error{\vect{x}}{}{}) = \vect{f}(\vect{x}) + (I_n \otimes A) \vect{x} + \vect{q}(\vect{x},\vect{x}-\error{\vect{x}}{}{})$,
\eqref{eq: stacked multi-agent system dynamic} becomes $\dot{\vect{x}} = \Phi(\vect{x}, \error{\vect{x}}{}{})$, where $ \error{\vect{x}}{}{}$ is interpreted as an input disturbance for the system with nominal unforced dynamics $\dot{\vect{x}} = \Phi(\vect{x}, 0_{Nn})$.
\begin{definition}[\cite{SONTAG1995351}]
\label{Definition set-ISS definition fomal}
    A system $\dot{x}=f(x,u)$, with $f:\mathbb{R}^n \times \mathbb{R}^m \rightarrow  \mathbb{R}^n$ is set-Input to State Stable (set-ISS) with respect to a set $\mathcal{A}$ if there exists a $\mathcal{KL}$ function $\beta$ and a $\mathcal{K}$ function $\gamma$ such that, for each initial condition and any locally essentially bounded input $u$ satisfying $\sup_{t\geq 0}\norm{u(t)}\leq \infty$, the following holds:
    \begin{equation}
    \label{set-Iss definition}
        \norm{x(t)}_{\mathcal{A}}\leq \beta(\norm{x(0)}_{\mathcal{A}},t) + \gamma \left(\sup_{0 \leq \tau\leq t}\norm{u(\tau)}\right),
    \end{equation}
    where $\norm{x(t)}_{\mathcal{A}} = \text{dist}(x, \mathcal{A})= \inf_{a \in \mathcal{A}}\{\norm{x-a}\}$.
\end{definition}
\begin{assumption}
    \label{Assumption: on local feedback controller}
    Under perfect state knowledge, the non-linear state-feedback $\vect{u}=\vect{q}(\vect{x})$ as in \eqref{input equation closed loop} ensures convergence of the multi-agent system to an equilibrium representing the system objective, e.g. consensus, formation and flocking.
\end{assumption}
Under Assumption \ref{Assumption: on local feedback controller}, we present the overall stability of the multi-agent system with the designed observer applied. 
\begin{theorem}
    \label{theorem on closed loop stability}
     Consider the multi-agent in \eqref{eq: agent's dynamic}. Suppose that the communication is described by a connected graph $\mathcal{G}$ and that each agent runs the distributed state and input observers \eqref{Eq: State observer dynamic} and \eqref{Eq: Input observer dynamic}. Furthermore, assume that each agent runs the local control input \eqref{eq: vectorized local input controller} under the assumption of bounded $K[\dot{q}_i](\cdot), \forall i \in \mathcal{V}$. Then, if $\Phi(\vect{x}, \error{\vect{x}}{}{})$ is set-ISS with respect to a set $\mathcal{A}$ representing the system objective and Assumption \ref{Assumption: on local feedback controller} holds, the multi-agent system reaches an equilibrium representing the team objective.
\end{theorem}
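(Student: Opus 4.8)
The plan is to treat the stacked state-estimation error $\error{\vect{x}}{}{}$ as an exogenous input disturbance acting on the closed loop $\dot{\vect{x}} = \Phi(\vect{x}, \error{\vect{x}}{}{})$, and to combine the finite-time vanishing of this disturbance, already established in Lemma \ref{Lemma:input and state observer combination}, with the set-ISS estimate of Definition \ref{Definition set-ISS definition fomal}. The key observation driving the whole argument is that the bare set-ISS inequality \eqref{set-Iss definition}, evaluated with the supremum of the disturbance over $[0,t]$, only yields ultimate boundedness of $\norm{\vect{x}(t)}_{\mathcal{A}}$ by a residual term of the form $\gamma(\mathcal{X})$; asymptotic convergence to the objective set is recovered precisely because the disturbance is \emph{identically zero} after the finite time $T_{xu}$.

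First I would invoke Lemma \ref{Lemma:input and state observer combination}: under the bounded-derivative assumption on the inputs, the state-estimation error satisfies $\norm{\error{\vect{x}}{}{}(t)} < \mathcal{X}$ for all $t \in [0, T_{xu}]$ and $\error{\vect{x}}{}{}(t) = 0$ for all $t > T_{xu}$, with $T_{xu} = T_u + T_x$. In particular $\error{\vect{x}}{}{}$ is a locally essentially bounded signal, hence an admissible input for the set-ISS property. Applying \eqref{set-Iss definition} over the finite interval $[0, T_{xu}]$ gives $\norm{\vect{x}(t)}_{\mathcal{A}} \le \beta(\norm{\vect{x}(0)}_{\mathcal{A}}, t) + \gamma(\mathcal{X})$, which is finite for every $t \in [0, T_{xu}]$; this rules out a finite escape time and guarantees $\norm{\vect{x}(T_{xu})}_{\mathcal{A}} < \infty$.

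The decisive step is to restart the estimate at $t = T_{xu}$. By causality of the set-ISS bound, the trajectory for $t \ge T_{xu}$ depends on the disturbance only through its restriction to $[T_{xu}, t]$, on which $\error{\vect{x}}{}{} \equiv 0$. Treating $\vect{x}(T_{xu})$ as a new initial condition and reapplying \eqref{set-Iss definition} on $[T_{xu}, \infty)$ therefore yields $\norm{\vect{x}(t)}_{\mathcal{A}} \le \beta(\norm{\vect{x}(T_{xu})}_{\mathcal{A}}, t - T_{xu}) + \gamma(0)$. Since $\gamma$ is a $\mathcal{K}$ function we have $\gamma(0) = 0$, and since $\beta$ is a $\mathcal{KL}$ function we have $\beta(s, \tau) \to 0$ as $\tau \to \infty$; hence $\norm{\vect{x}(t)}_{\mathcal{A}} \to 0$. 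For $t > T_{xu}$ the closed loop coincides with the nominal unforced dynamics $\dot{\vect{x}} = \Phi(\vect{x}, 0_{Nn})$, so by Assumption \ref{Assumption: on local feedback controller} the trajectory settles on the equilibrium encoding the team objective, completing the argument.

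The main obstacle I anticipate is not the terminal $\mathcal{KL}$ decay but the justification of the restart: one must argue that the set-ISS estimate may be reapplied with initial time $T_{xu}$ and zero input, which relies on the causality/semigroup structure of the ISS bound rather than on a single global application of \eqref{set-Iss definition}. A careful statement also has to confirm that $\error{\vect{x}}{}{}$ remains measurable and essentially bounded across the switching instant $T_{xu}$, so that the solution is well defined and no finite escape occurs on $[0, T_{xu}]$; both facts follow directly from Lemma \ref{Lemma:input and state observer combination}.
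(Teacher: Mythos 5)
Your proposal is correct and follows essentially the same route as the paper's proof: invoke Lemma \ref{Lemma:input and state observer combination} to get boundedness of $\error{\vect{x}}{}{}$ up to $T_{xu}$ and its vanishing thereafter, then apply the set-ISS estimate \eqref{set-Iss definition} restarted at $t=T_{xu}$ with zero input so that only the $\mathcal{KL}$ term $\beta(\norm{\vect{x}(T_{xu})}_{\mathcal{A}}, t-T_{xu})$ remains and convergence to $\mathcal{A}$ follows. Your explicit treatment of the restart (causality of the ISS bound, no finite escape on $[0,T_{xu}]$ via the $\gamma(\mathcal{X})$ residual) is a point the paper passes over silently, but it is a refinement of the same argument rather than a different one.
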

\begin{proof}
    Given the validity of Lemma \ref{Lemma:input and state observer combination}, there exists an upper bound $\mathcal{X}> 0$, such that for all agents we have $\norm{\error{\vect{x}}{i}{}(t)} < \mathcal{X}$. As a result $\norm{\error{\vect{x}}{}{}(t)} < \sqrt{n} \mathcal{X}$, $\forall t$. Furthermore, Lemma \ref{Lemma:input and state observer combination} guarantees the existence of $T_{xu}>0$ such that $\norm{\error{\vect{x}}{i}{}(t)}= 0, \forall t > T_{xu}$. By exploiting the set-ISS assumption on $\Phi(\vect{x}, \error{\vect{x}}{}{})$ and that from $t=T_{xu}$ the multi-agent system evolves from $\vect{x}(T_{xu})$ under the dynamics $\Phi(\vect{x},0_{Nn} )$, we can conclude that $\norm{\vect{x}(t)}_{\mathcal{A}}\leq \beta(\norm{\vect{x}(T_{xu})}_{\mathcal{A}},t-T_{xu}), \forall t > T_{xu}$. As a result, thanks to the convergence of $\beta(\norm{\vect{x}(T_{xu})}_{\mathcal{A}},t-T_{xu})$ to $0$ resulting from the set-ISS definition, an equilibrium is achieved and the convergence toward the objective represented by the set $\mathcal{A}$ is guaranteed. 
\end{proof}

\section{Simulations}\label{Simulations}
\begin{figure}[t!]
    \vspace{0.2cm}
    \centering
    \def\svgwidth{0.2\textwidth}
    \import{figs/}{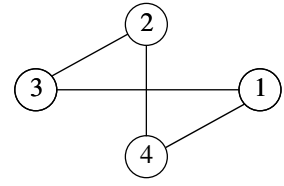}
    \caption{Graph $\mathcal{G}_T$ used for consensus.}
    \label{fig:Graph Gt used for consensus}
       \vspace{-0.4cm}
\end{figure}
Consider a multi-agent system composed of $n=4$ agents communicating according to the path graph  $\mathcal{G}_C=(\mathcal{V},\mathcal{E}_C)$ depicted in Fig. \ref{fig: 2 fig 3-hop neihbours communication graph = path graph}.
Assume each agent behaves according to the single integrator dynamic $\dot{x}_i = u_i$,
where $x_i\in [x_{\text{min}}, x_{\text{max}}] \subset \mathbb{R}^2$ and the input $u_i$ is designed in order to drive the agents towards consensus by exploiting only the edges of the graph $\mathcal{G}_T=(\mathcal{V},\mathcal{E}_T)$ shown in Fig. \ref{fig:Graph Gt used for consensus}, i.e.:
\begin{equation}
    \label{Simulation input expression}
    u_i(t) = \sum_{j\in \mathcal{N}^{CT}_i }(x_j(t)-x_i(t)) + \sum_{j\in \mathcal{N}^T_i/\mathcal{N}^{CT}_i}(\hat{x}^i_j(t)-x_i(t)),
\end{equation}
where $\mathcal{N}^C_i$ and $\mathcal{N}^T_i$ are the neighbors of agent $i \in \mathcal{V}$ respectively in graph $\mathcal{G}_C$ and $\mathcal{G}_T$ and $\mathcal{N}^{CT}_i=(\mathcal{N}^C_i  \cap  \mathcal{N}^T_i)$.

It is worth noticing that the problem under study differs from the classical consensus problem, as edges not belonging to the communication graph are exploited to achieve the team objective.
Given the boundedness of the state and of the state estimations, it is possible to prove the existence of an upper bound for the input, i.e., $ u_i(t) \leq \card{N}{T}{i} d_{\text{max}}$
with $d_{\text{max}} = x_{\text{max}}-x_{\text{min}}$.
This, according to the consideration performed in Remark \ref{remark on convergence of state independently upon input}, implies that the state observation of each agent converges independently on the input observer behavior.
To claim the applicability of Theorem \ref{theorem on closed loop stability} to this case study, Assumption \ref{Assumption: on local feedback controller} and the set-ISS property of $u_i$ with respect to the set $\mathcal{A}$ representing the state consensus along the $2$ state components needs to be checked. For this purpose note that, given  the connectivity of $\mathcal{G}_T=(\mathcal{V},\mathcal{E}_T)$,  the input $u_i(t) = \sum_{j\in \mathcal{N}^T_i}(x_j(t)-x_i(t))$
guarantees the convergence of the multi-agent system towards consensus \cite{eb184279-05f5-3acc-a300-750c6f4a17e8}. 
Furthermore, since the input $u_i$ can be rewritten as $u_i(t)=\sum_{j\in \mathcal{N}^{T}_i }(x_j(t)-x_i(t)) - v_i$,
with $v_i(t)=\sum_{j\in \mathcal{N}^T_i/\mathcal{N}^{CT}_i} \error{x}{i}{j}$
bounded, it is possible to prove that the vectorized state dynamics $\dot{\vect{x}}(t)= - L_{T} \vect{x}(t)- \vect{v}(t)$,
where $L_T$ is the Laplacian matrix of the graph $\mathcal{G}_T$ and $\vect{v}(t)=[v_1(t), \dots, v_n(t)]^T$, fulfill the following:
\begin{equation}
    \label{iss simulation}
    \norm{\vect{x}(t)}_{\mathcal{A}} \leq e^{-\lambda_2(L_T)t} \norm{\vect{x}(0)}_{\mathcal{A}} + \frac{1}{\lambda_2(L_T)} \sup_{0\leq \tau\leq t}\norm{\vect{v}(\tau)},
\end{equation}
where $\lambda_2(L_T)$ is the minimum eigenvalue greater than $0$.  Then, given the convergence of the state observer, \eqref{iss simulation} is consistent with the set-ISS definition. 
As a result, Theorem \ref{theorem on closed loop stability} holds and the state and input observers with $k=3$ can be adopted to control the system towards consensus. 

For the purpose of the simulations, a sampling time $dt = 10^{-3} s$ and parameters satisfying Theorems \ref{Theorem: state observer convergence theorem} and \ref{Theorem on input observer} have been chosen. 
Fig. \ref{Simulation 1} shows the results obtained with design parameters: $g= 20$, $\omega_1 = \omega_4 = 2.62$, $\omega_2 = \omega_3 = 1.0$, $\theta_1 = \theta_4 = 3.4$, $\theta_2 = \theta_3 = 0.5$, $\pi_1= \pi_4= 9.7$, $\pi_2= \pi_3 =1.0$ as per \eqref{Eq: State observer dynamic} and \eqref{Eq: Input observer dynamic}.
While the agents input vector is initialized by means of \eqref{Simulation input expression} and according to the states and state estimation information, the estimated input vector is initialized to zero for every agent. As introduced in Remark \ref{remark on convergence of state independently upon input}, thanks to the bounded inputs, the states estimations converge allowing the agents to achieve consensus independently from the input observer dynamics.   

\begin{figure}[t!]
    \centering
    \def\svgwidth{1\textwidth}
    \includegraphics[width=1\linewidth]{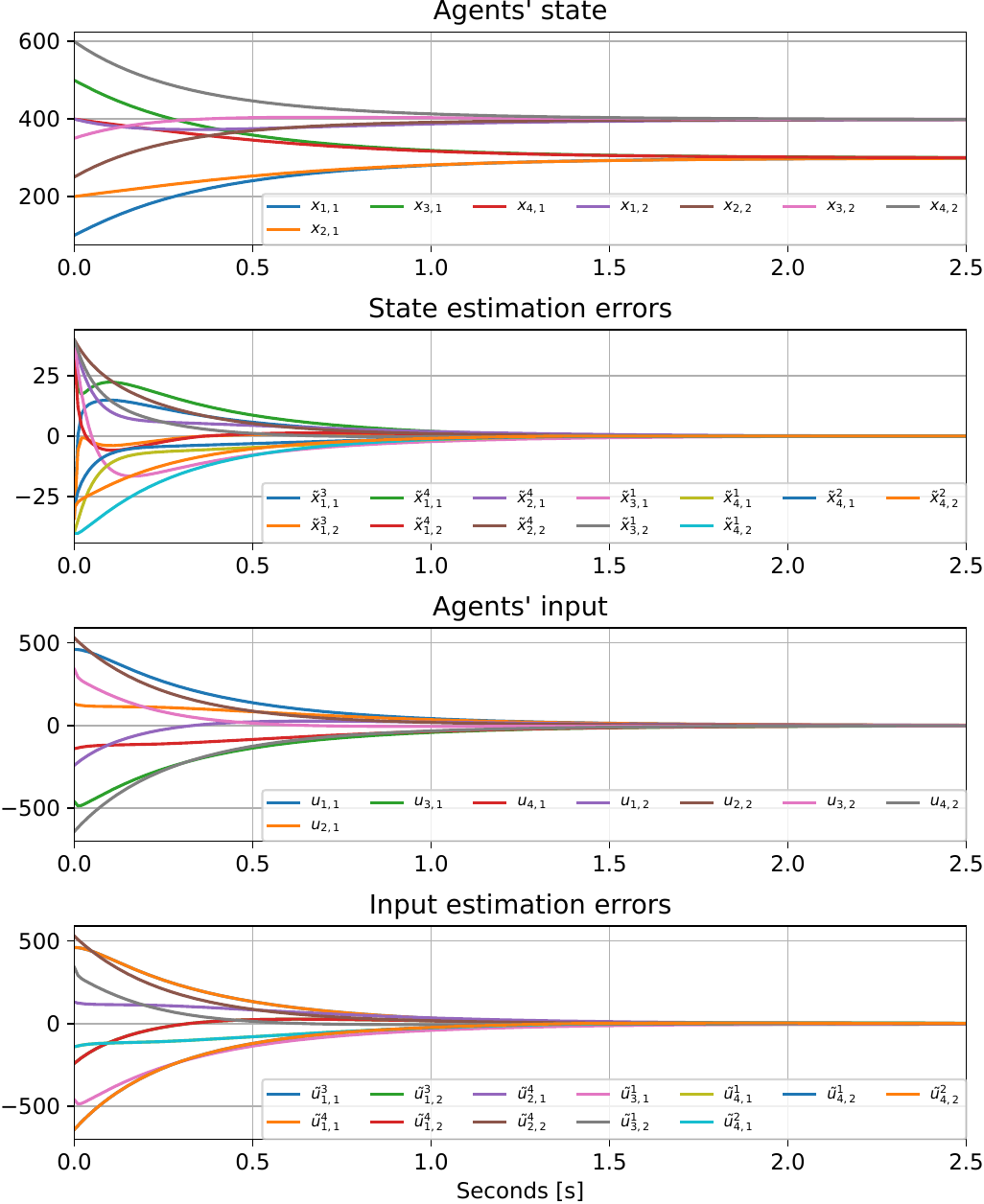}
    \caption{Simulation results with $\pi_1= \pi_4= 9.7$, $\pi_2= \pi_3 =1.0$ as designed parameters for the input observer in \eqref{Eq: Input observer dynamic}.}
    \label{Simulation 1}
    \vspace{-0.6cm}
\end{figure}

\section{Conclusion and Future Work}\label{Conclusion_and_Future_work}
We proposed a communication based $k$-hop distributed observer in which each agent estimates only the states and the inputs of those agents within $k$-hop distance according to the communication graph. The distributed state and input observers result to be finite time convergent and provide state estimations that, under set-ISS condition of the feedback control law, can be used to drive the agents towards an equilibrium representing the team objective. 

As presented in Section \ref{Preliminaries}, while Assumption \ref{Assumption on 2-hop communication} is reasonable for the state if agents are equipped with sensors, it seams more restricting for what concerns the input. For this reason, in addition to study possible disturbance effects, future works will be oriented toward analyzing how the delays on $2$-hop input propagation may affect the observer convergence.


\bibliographystyle{IEEEtran}
\bibliography{references} 
\end{document}